\documentclass{kms-j}


\issueinfo{}
  {}
  {}
  {}
\pagespan{1}{}
\copyrightinfo{}
  {Korean Mathematical Society}

\usepackage{graphicx}
\usepackage{multirow}
\allowdisplaybreaks

\theoremstyle{plain}
\newtheorem{theorem}{Theorem}[section]

\newtheorem{lemma}[theorem]{Lemma}
\newtheorem{corollary}[theorem]{Corollary}

\theoremstyle{definition}

\theoremstyle{remark}
\newtheorem{remark}[theorem]{Remark}

\begin{document}

\title[Several classes of three-weight or four-weight linear codes ]
{Several classes of three-weight or four-weight linear codes }

\author[Q. Liao]{Qunying Liao}
\address{Qunying Liao \\College of Mathematical Science \\ Sichuan Normal University \\ Chengdu Sichuan, 610066, China}
\email{qunyingliao@sicnu.edu.cn}

\author[Z. Zhang]{Zhaohui Zhang}
\address{Zhaohui Zhang \\College of Mathematical Science \\ Sichuan Normal University \\ Chengdu Sichuan, 610066, China}
\email{1192114967@qq.com}

\author[P. Zheng]{Peipei Zheng}
\address{Peipei Zheng \\College of Mathematical Science \\ Sichuan Normal University \\ Chengdu Sichuan, 610066, China}
\email{18383258130@163.com}

\thanks{This work was financially supported by National Natural Science Foundation of China (Grant No. 12071321 \& 12471494 \& 12461001).}

\subjclass[2020]{Primary: 94B05; Secondary: 94A24.}
\keywords{linear code, defining set, weight distribution, three-weight, four-weight}

\begin{abstract}
In this manuscript, we construct 
a class of  projective three-weight linear codes and two classes of projective four-weight linear codes over $\mathbb{F}_{2}$ from the defining sets construction, and determine their weight distributions by using additive characters. Especially, the  projective three-weight linear code and one class of projective four-weight linear codes (Theorem 4.1) can be applied in secret sharing schemes.
\end{abstract}

\maketitle

\section{Introduction}

	Denote $\mathbb{F}_{p^{m}}$ as the finite field with $p^{m}$ elements and $\mathbb{F}_{p^{m}}^{*}=\mathbb{F}_{p^{m}} \backslash\{0\}$, where $p$ is a prime. An $[n, k, d]$ linear code $\mathcal{C}$ of length $n$ over $\mathbb{F}_{q}$ is a $k$-dimensional linear subspace of $\mathbb{F}_{q}^{n}$ with minimum (Hamming) distance $d$, where $q$ is a power of the  prime $p$. Let $A_{i}$ denote the number of codewords in $\mathcal{C}$ with weight $i$. The sequence $(1, A_{1}, \ldots, A_{n})$ is called the weight distribution of $\mathcal{C}$ and the polynomial $1+A_{1} z+\cdots+A_{n} z^{n}$ is called the weight enumerator of $\mathcal{C}$.  $\mathcal{C}$ is called a $t$-weight code if the number of nonzero $A_{j}$ in the sequence $(A_{1},  \ldots, A_{n})$ is equal to $t$. The complete weight enumerator is an important parameter for a linear code, obviously, the weight distribution can be deduced from the complete weight enumerator. In addition, the weight distribution of $\mathcal{C}$ can be applied to determine the capability for both error-detection and error-correction\cite{A1}. The dual code $\mathcal{C}^{\perp}$ of $\mathcal{C}$ is defined by $\mathcal{C}^{\perp}=\left\{\mathbf{x} \in \mathbb{F}_{q}^{n}: \mathbf{x} \cdot \mathbf{c}=0\right.$ for all $\left.\mathbf{c} \in \mathcal{C}\right\}$. A linear code whose
     dual code has the minimal distance $d^{\perp}\geq3$ is called a projective code. 

    Linear codes over finite fields are applied in data storage devices, computer and communication systems, and so on. In particular, few-weight linear codes have been better applications in secret sharing schemes \cite{A2,A3}, association schemes \cite{A4}, strongly regular graphs\cite{A60}, authentication codes \cite{A5}, and so on. 

   A number of two-weight or three-weight linear codes have been constructed \cite{A13,A15,A18,A19,A20,A22,A23,A24,A25,A26,A28,A30,A31,A32,A33,A35,A36,A38,A39,A41,A44}.
   Especially, in 2015, Ding et al.\cite{A18} proposed a method to construct two-weight or three-weight linear codes from the defining set. As follows, Ding et al. introduced the linear code 
\[
\mathcal{C}_{D}=\left\{(\operatorname{Tr}(x d_{1}), \operatorname{Tr}(x d_{2}), \ldots, \operatorname{Tr}(x d_{n})) | x \in \mathbb{F}_{p^{m}}\right\}
\]
by choosing
$$D=\left\{d_{1}, d_{2}, \ldots, d_{n}\right\} \subseteq \mathbb{F}_{p^{m}}^{*}.$$

In 2023, Zhu et al.\cite{A46} introduced the linear code 
\[
\mathcal{C}_{\tilde{D}}=\left\{(\operatorname{Tr}(a y x^{d}+b x))_{(x, y) \in \tilde{D}} |(a, b) \in \mathbb{F}_{p^{m}} \times \mathbb{F}_{p^{m}}\right\}
\]
by choosing
\[
D^{*}=\left\{(x, y) \in \mathbb{F}_{p^{m}}^{*} \times \mathbb{F}_{p^{m}}^{*} \mid \operatorname{Tr}(y x^{d+1})=0\right\}
\]
and
\[
D_{\lambda}=\left\{(x, y) \in \mathbb{F}_{p^{m}}^{*} \times \mathbb{F}_{p^{m}} \mid \operatorname{Tr}(y x^{d+1})=\lambda\right\},
\]
where $\tilde{D}=D^{*} \text { or } D_{\lambda}$, $\lambda \in \mathbb{F}_{p}$ and $d$ is a positive integer.

In this manuscript, for a given positive integer $d$ and $m>2$, we consider the linear code $\mathcal{C}_{D_{i}}$ which is defined
by
\begin{align}
	&\mathcal{C}_{D_{i}}=\left\{\mathbf{c}(a, b)=(\operatorname{Tr}(a x^dy +bx))_{(x, y) \in D_{i}}: a, b \in \mathbb{F}_{2^{m}}\right\}(i=1,2,3),
\end{align}	
where
\begin{align}
	&D_{1}=\left\{(x, y) \in \mathbb{F}_{2^{m}}^{*} \times \mathbb{F}_{2^{m}}: \operatorname{Tr}(x^{d+1}y+x^{d}y+x)=0\right\},\\
	&D_{2}=\left\{(x, y) \in \mathbb{F}_{2^{m}}^{*} \times \mathbb{F}_{2^{m}}^{*}: \operatorname{Tr}(x^{d+2}y+x^{d+1}y)=0\right\},\\
	&D_{3}=\left\{(x, y) \in \mathbb{F}_{2^{m}}^{*} \times \mathbb{F}_{2^{m}}: \operatorname{Tr}(x^{d+1}y)=0~and~\operatorname{Tr}(x^{d}y)=1\right\}.
\end{align}
The parameters and the complete weight enumerator of $\mathcal{C}_{D_{i}}$ are determined based on additive characters. In particular, $\mathcal{C}_{D_{1}}$ and $\mathcal{C}_{D_{2}}$ are  both suitable for applications in secret sharing schemes.

$\mathcal{C}_{D_{1}}$ for $m$ even and Theorem $3$ in \cite{A57}
have the same weight distribution, therefore we only investigate the weight distribution of $\mathcal{C}_{D_{1}}$ for $m$ odd.

This manuscript is organized as follows. In Section $2$, we provide a brief summary for the relevant properties of the trace function and additive characters over finite fields. In Section $3$, we present some necessary results that will be utilized in  Section 4. In Section $4$, we give our main results and their proofs.  In Section $5$, we conclude the whole manuscript.

\section{Preliminaries}

	In this section, we first introduce the concepts of the trace function and additive characters over finite fields.
\begin{lemma}\cite{A47}
	(a) For $\alpha \in \mathbb{F}_{2^{m}}$, the trace function $\operatorname{Tr}(\alpha)$ of $ \alpha$ over $\mathbb{F}_{2^{m}}$ is defined by
	$$\operatorname{Tr}(\alpha)=\alpha+\alpha^{2}+\cdots+\alpha^{2^{m-1}};$$
	
	(b) when $m$ is odd, $\operatorname{Tr}(1)=1;$
	
	(c) $\operatorname{Tr}(\alpha+\beta)=\operatorname{Tr}(\alpha)+\operatorname{Tr}(\beta)$ for all $\alpha, \beta \in \mathbb{F}_{2^{m}}.$
\end{lemma}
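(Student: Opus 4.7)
The plan is straightforward since the three assertions are standard facts about the absolute trace $\operatorname{Tr}\colon \mathbb{F}_{2^m} \to \mathbb{F}_2$; the only content-bearing point worth verifying for (a) is that the displayed sum indeed takes values in $\mathbb{F}_2$ rather than just in $\mathbb{F}_{2^m}$. I would apply the Frobenius (squaring) map to $\operatorname{Tr}(\alpha)$, use $\alpha^{2^m}=\alpha$ for $\alpha\in\mathbb{F}_{2^m}$, and observe that squaring cyclically permutes the summands, giving
\[
\operatorname{Tr}(\alpha)^2 \;=\; \alpha^2 + \alpha^4 + \cdots + \alpha^{2^m} \;=\; \alpha^2 + \alpha^4 + \cdots + \alpha \;=\; \operatorname{Tr}(\alpha),
\]
so $\operatorname{Tr}(\alpha)$ is fixed by Frobenius and hence lies in $\mathbb{F}_2$.

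For (b), I would simply substitute $\alpha = 1$ into the defining formula of (a). Since $1^{2^i} = 1$ for every $i \ge 0$, the sum collapses to $m$ copies of $1$ added in $\mathbb{F}_2$, which equals $m \bmod 2$; this is $1$ precisely when $m$ is odd.

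For (c), the key tool is the Freshman's Dream in characteristic $2$: the Frobenius $x \mapsto x^2$ is $\mathbb{F}_2$-linear, and iterating yields $(\alpha + \beta)^{2^i} = \alpha^{2^i} + \beta^{2^i}$ for every $i \ge 0$. Term-by-term expansion then gives
\[
\operatorname{Tr}(\alpha + \beta) \;=\; \sum_{i=0}^{m-1}\bigl(\alpha^{2^i} + \beta^{2^i}\bigr) \;=\; \operatorname{Tr}(\alpha) + \operatorname{Tr}(\beta).
\]

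I do not anticipate any genuine obstacle, since this is a textbook result cited from \cite{A47}; the only subtlety is the verification in (a) that $\operatorname{Tr}(\alpha) \in \mathbb{F}_2$, and the \emph{hard part}, such as it is, is only to keep the bookkeeping of the cyclic Frobenius shift tidy so that the identification $\alpha^{2^m} = \alpha$ is invoked exactly once.
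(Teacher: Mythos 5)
Your proposal is correct; all three verifications (the Frobenius-fixed-point argument showing $\operatorname{Tr}(\alpha)\in\mathbb{F}_2$, the substitution $\alpha=1$ giving $m\bmod 2$, and the additivity via $(\alpha+\beta)^{2^i}=\alpha^{2^i}+\beta^{2^i}$) are the standard ones. The paper itself gives no proof of this lemma, citing it directly from Lidl--Niederreiter, so there is nothing to compare against; your argument is exactly the textbook one being referenced.
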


Let ${\zeta_{p}}$ be a primitive ${p}$-th root of unity and ${\zeta_{p}}=\frac{2\pi i}{p}$$(i=\sqrt{-1})$. An additive character ${\chi_{i}}$ of ${\mathbb{F}_{2^{m}}}$ over ${\mathbb{F}_{2}}$ is a homomorphism from ${\mathbb{F}_{2^{m}}}$ into the multiplicative group composed by ${\zeta_{2}}$. For any ${i \in \mathbb{F}_{2^{m}},~\chi_{i}(x)=(-1)^{\operatorname{Tr}(i x)}}$, where ${x \in \mathbb{F}_{2^{m}}}$. If ${i=0, ~\chi_{0}(x)=1}$ for all ${x \in \mathbb{F}_{2^{m}}}$ and ${\chi_{0}}$ is  called the trivial additive character of ${\mathbb{F}_{2^{m}}}$. If ${i=1, ~\chi_{1}}$ is called the canonical additive character of ${\mathbb{F}_{2^{m}}}$, which is recorded as ${\chi}$ in this manuscript. The orthogonality of the additive character is given by
\begin{align}\label{C1}
	&\sum_{x \in \mathbb{F}_{2^{m}}} \chi(x)=\sum_{x \in \mathbb{F}_{2^{m}}}(-1)^{\operatorname{Tr}(x)}=0,
\end{align}
\begin{align}\label{C2}
	&\sum_{x \in \mathbb{F}_{2^{m}}^{*}} \chi(x)=\sum_{x \in \mathbb{F}_{2^{m}}^{*}}(-1)^{\operatorname{Tr}(x)}=-1.
\end{align}

\begin{lemma}\cite{A6}
	Let $\mathcal{C}$ be an ${[n, k]}$ code over ${\mathbb{F}_{2}}$ with weight distribution $(1, A_{1}, \ldots, A_{n})$ and $\mathcal{C^{\perp}}$ be the dual code with weight distribution $(1, A_{1}^{\perp}, \ldots, A_{n}^{\perp})$, then the first three Pless power moments are as follows,
	$$
	\begin{array}{l}
		\displaystyle{\sum_{j=0}^{n}} A_{j}=2^{k}, \\
		\displaystyle{\sum_{j=0}^{n}} j A_{j}=2^{k-1}(n-A_{1}^{\perp}),\\
	
		\displaystyle{\sum_{j=0}^{n}} j^{2} A_{j}=2^{k-2}(n(n+1)-2n A_{1}^{\perp}+2 A_{2}^{\perp}) .
	\end{array}
	$$
\end{lemma}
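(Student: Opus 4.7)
My plan is to derive all three identities from a single source, the binary MacWilliams identity
$$\sum_{j=0}^{n} A_{j}^{\perp} z^{j} \;=\; \frac{1}{2^{k}} \sum_{j=0}^{n} A_{j}(1-z)^{j}(1+z)^{n-j},$$
and then extract the moments by successively differentiating both sides and evaluating at $z=0$. The constant term yields the zeroth moment; the first-order Taylor coefficient yields the first moment; the second-order Taylor coefficient yields the second. I would take the MacWilliams identity itself as known (it is the standard consequence of Poisson summation applied to the indicator function of $\mathcal{C}$ inside $\mathbb{F}_{2}^{n}$).

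For the zeroth moment I would simply set $z=0$: the LHS collapses to $A_{0}^{\perp}=1$ and the RHS to $2^{-k}\sum_{j} A_{j}$, which rearranges to $\sum_{j} A_{j}=2^{k}$. For the first moment I differentiate once in $z$ and evaluate at $z=0$. The LHS becomes $A_{1}^{\perp}$, while one computes $\frac{d}{dz}\bigl[(1-z)^{j}(1+z)^{n-j}\bigr]\bigm|_{z=0} = n-2j$, so the RHS becomes $2^{-k}\sum_{j} A_{j}(n-2j) = n - 2\cdot 2^{-k}\sum_{j} j A_{j}$. Rearranging gives $\sum_{j} j A_{j} = 2^{k-1}(n-A_{1}^{\perp})$.

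For the second moment I differentiate twice and evaluate at $z=0$. The LHS contributes $2A_{2}^{\perp}$, and a short product-rule computation shows $\frac{d^{2}}{dz^{2}}\bigl[(1-z)^{j}(1+z)^{n-j}\bigr]\bigm|_{z=0} = (n-2j)^{2}-n$. Expanding $(n-2j)^{2}-n = 4j^{2}-4nj+n^{2}-n$, splitting the resulting sum into $\sum A_{j}$, $\sum j A_{j}$, $\sum j^{2} A_{j}$, and substituting the values already obtained for the first two moments, one arrives at $\sum_{j} j^{2} A_{j} = 2^{k-2}\bigl(n(n+1)-2n A_{1}^{\perp}+2A_{2}^{\perp}\bigr)$.

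The only real obstacle is the careful bookkeeping at the second-derivative step; no deeper idea is needed, since each moment is an algebraic consequence of differentiating the MacWilliams identity a prescribed number of times and feeding back the previously computed moments. In particular, the entire proof is formal once the MacWilliams identity is accepted as given, which is why this classical lemma can be quoted directly from the reference without further elaboration.
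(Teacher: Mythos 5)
Your derivation is correct: all three moment computations check out (in particular $\frac{d}{dz}[(1-z)^{j}(1+z)^{n-j}]|_{z=0}=n-2j$ and $\frac{d^{2}}{dz^{2}}[(1-z)^{j}(1+z)^{n-j}]|_{z=0}=(n-2j)^{2}-n$ are both right, and the back-substitution of the lower moments is done correctly). The paper simply quotes this classical lemma from the reference without proof, and your argument --- extracting Taylor coefficients of the binary MacWilliams identity at $z=0$ --- is exactly the standard derivation of the Pless power moments given there.
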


  For a linear code $\mathcal{C}$ with length $n$, the support of a codeword $\mathbf{c}=(c_{1}, \ldots, c_{n}) \in$ $\mathcal{C} \backslash\{\boldsymbol{0}\}$ is denoted by
\[
\operatorname{supp}(\mathbf{c})=\left\{~i~ |~ c_{i} \neq 0,~ i=1, \ldots, n\right\}.
\]

For $\mathbf{c}_{1}, \mathbf{c}_{2} \in \mathcal{C}$, when $\operatorname{supp}(\mathbf{c}_{2}) \subseteq \operatorname{supp}(\mathbf{c}_{1})$, we say that $\mathbf{c}_{1}$ covers $\mathbf{c}_{2}$. A nonzero codeword $\mathbf{c} \in \mathcal{C}$ is minimal if it covers only the codeword $\lambda \mathbf{c}(\lambda \in \mathbb{F}_{q}^{*})$, but no other codewords in $\mathcal{C}$. The following lemma provides a sufficient condition for a linear code to be minimal.

\begin{lemma}\cite{A51}
	Every nonzero codeword of a linear code $\mathcal{C}$ over $\mathbb{F}_{2}$ is minimal provided that
	\[
	\frac{w_{\min }}{w_{\max }}>\frac{1}{2},
	\]
	where $w_{\min }$ and $w_{\max }$ are the minimal
	and maximal nonzero weights of the linear code $\mathcal{C}$ over $\mathbb{F}_{2}$, respectively.
\end{lemma}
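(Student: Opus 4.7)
The plan is to argue by contrapositive: assume some nonzero codeword $\mathbf{c}_1$ covers a codeword $\mathbf{c}_2$ that is not a scalar multiple of $\mathbf{c}_1$, and derive the inequality $w_{\min}/w_{\max} \le 1/2$. Since we are working over $\mathbb{F}_2$, the only nonzero scalar is $1$, so ``$\mathbf{c}_2$ is not a scalar multiple of $\mathbf{c}_1$'' just means $\mathbf{c}_2 \ne \mathbf{c}_1$ (and $\mathbf{c}_2 \ne \mathbf{0}$).

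The key observation is a weight identity over $\mathbb{F}_2$. If $\operatorname{supp}(\mathbf{c}_2) \subseteq \operatorname{supp}(\mathbf{c}_1)$, then in every coordinate $i$ where $\mathbf{c}_2$ has a $1$, the codeword $\mathbf{c}_1$ also has a $1$, so the coordinate $i$ of $\mathbf{c}_1+\mathbf{c}_2$ is $0$; in every coordinate where $\mathbf{c}_2$ has a $0$, the sum inherits whatever $\mathbf{c}_1$ has. Hence
\[
\operatorname{supp}(\mathbf{c}_1+\mathbf{c}_2) = \operatorname{supp}(\mathbf{c}_1) \setminus \operatorname{supp}(\mathbf{c}_2),
\]
which gives the clean relation $\mathrm{wt}(\mathbf{c}_1+\mathbf{c}_2) = \mathrm{wt}(\mathbf{c}_1) - \mathrm{wt}(\mathbf{c}_2)$.

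From here the proof is immediate. Under the assumption $\mathbf{c}_2 \ne \mathbf{c}_1$, the codeword $\mathbf{c}_1 + \mathbf{c}_2$ is nonzero, so its weight is at least $w_{\min}$; and of course $\mathrm{wt}(\mathbf{c}_2) \ge w_{\min}$. Adding these two lower bounds yields
\[
\mathrm{wt}(\mathbf{c}_1) = \mathrm{wt}(\mathbf{c}_1+\mathbf{c}_2) + \mathrm{wt}(\mathbf{c}_2) \ge 2w_{\min}.
\]
Combining with $\mathrm{wt}(\mathbf{c}_1) \le w_{\max}$ gives $w_{\max} \ge 2 w_{\min}$, i.e.\ $w_{\min}/w_{\max} \le 1/2$, which contradicts the hypothesis. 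Therefore every nonzero $\mathbf{c}_1$ is minimal.

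I do not foresee a real obstacle here; the only subtlety worth highlighting is that the characteristic-$2$ identity $\mathrm{wt}(\mathbf{c}_1+\mathbf{c}_2) = \mathrm{wt}(\mathbf{c}_1) - \mathrm{wt}(\mathbf{c}_2)$ (valid precisely because of the covering hypothesis) is what makes the argument as short as it is. Over a larger field $\mathbb{F}_q$ one must instead vary the scalar $\lambda$ to obtain an analogous inequality, which is why the statement here is restricted to binary codes.
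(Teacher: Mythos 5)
Your proof is correct. The paper states this lemma as a citation to Ashikhmin--Barg without giving a proof, and your argument is exactly the standard one (the $q=2$ case of their bound): the covering hypothesis forces $\mathrm{wt}(\mathbf{c}_1+\mathbf{c}_2)=\mathrm{wt}(\mathbf{c}_1)-\mathrm{wt}(\mathbf{c}_2)$, and bounding both summands below by $w_{\min}$ gives $w_{\max}\ge 2w_{\min}$, contradicting the hypothesis.
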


\section{Some necessary Lemmas}

	In this section, we give some important auxiliary results which will be used in the sequel. Obviously, $(\operatorname{Tr}(a x^dy +bx))_{(x, y) \in D_{i}}$ is the zero codeword when $(a, b)=(0,0)$. Hence we will default $ab\neq0$ in the following.

\begin{lemma}
	For $a \in \mathbb{F}_{2^{m}}^{*}$, suppose that $f(x)=x^{2}+x+a \in \mathbb{F}_{2^{m}}[x]$ is a function from $\mathbb{F}_{2^{m}}$ to $\mathbb{F}_{2^m}$, then $f(x)=0$ has two nonzero solutions if and only if $a=r^2+r$, where $r \in \mathbb{F}_{2^m}\setminus \mathbb{F}_{2}$.
\end{lemma}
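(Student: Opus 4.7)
\medskip

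The plan is to handle the two directions separately and exploit the characteristic-$2$ trick that if $r$ is a root of $x^2+x+a$, then so is $r+1$.

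For the forward direction, I would assume $f$ has two nonzero roots $r_1,r_2\in\mathbb{F}_{2^m}^{*}$. Since $f$ is monic of degree $2$, Vi\`ete's formulas give $r_1+r_2=1$ and $r_1r_2=a$. Thus $r_2=r_1+1$, and setting $r:=r_1$ yields $a=r(r+1)=r^2+r$. From $r\ne 0$ and $r+1\ne 0$ we get $r\notin\{0,1\}=\mathbb{F}_2$, which is the desired conclusion.

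For the converse, suppose $a=r^2+r$ with $r\in\mathbb{F}_{2^m}\setminus\mathbb{F}_2$. Substituting shows $f(r)=r^2+r+a=2(r^2+r)=0$, and a direct expansion gives $f(r+1)=(r+1)^2+(r+1)+a=r^2+r+a=0$; thus $r$ and $r+1$ are both roots. They are distinct because $r\ne r+1$ in characteristic $2$, so they are the only roots of the degree-$2$ polynomial $f$. Finally, $r\notin\mathbb{F}_2$ forces both $r\ne 0$ and $r+1\ne 0$, so both roots are nonzero.

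No step should be a serious obstacle; the only thing worth being careful about is the nonzero condition on the roots, which matches exactly the exclusion $r\notin\mathbb{F}_2$ (note that $r\in\mathbb{F}_2$ forces $a=r^2+r=0$, consistent with the hypothesis $a\in\mathbb{F}_{2^m}^{*}$). The argument makes no use of the trace-zero condition on $a$, since we are not asserting existence of roots from scratch but rather parametrizing them.
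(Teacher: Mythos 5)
Your proof is correct and follows essentially the same route as the paper's: Vi\`ete's formulas ($r_1+r_2=1$, $r_1r_2=a$) for the forward direction, and direct verification that $r$ and $r+1$ are the two roots for the converse. You are somewhat more careful than the paper in spelling out why the roots being nonzero is equivalent to $r\notin\mathbb{F}_2$, but there is no substantive difference in approach.
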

\begin{proof}
Let $x_{1}, x_{2} \in \mathbb{F}_{2^m}\setminus \mathbb{F}_{2}$ be the nonzero solutions of $f(x)=0$, and according to Vieta Theorem, $x_{1}+x_{2}=1, x_{1} x_{2}=a$, so that $a=x_{1}^2+x_{1}$. Conversely, if $a=r^2+r$, then obviously $r$ and $r+1$ are both two nonzero solutions of $f(x)=0$.
\end{proof}

By Lemma 3.1, we  have the following 
\begin{corollary}
	Let $\Upsilon_{1}=\left\{a=r^2+r,~ r\in \mathbb{F}_{2^m}^{*}\right\}$, then $\left|\Upsilon_{1}\right|=2^{m-1}-1$.
\end{corollary}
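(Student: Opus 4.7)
The plan is to view the assignment $r \mapsto r^{2}+r$ as an $\mathbb{F}_{2}$-linear endomorphism of $\mathbb{F}_{2^{m}}$, so that $\Upsilon_{1}$ is described (up to removing the zero element) as the image of that endomorphism. Concretely, I would set $\phi(r)=r^{2}+r$ and note that since the Frobenius is a field homomorphism and the characteristic is $2$, the map $\phi:\mathbb{F}_{2^{m}}\to\mathbb{F}_{2^{m}}$ is $\mathbb{F}_{2}$-linear.

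Next I would compute $\ker\phi$: $\phi(r)=0$ forces $r^{2}=r$, i.e.\ $r\in\mathbb{F}_{2}=\{0,1\}$. Hence $|\ker\phi|=2$, and the first isomorphism theorem (or simple counting of cosets) yields $|\operatorname{Im}\phi|=2^{m}/2=2^{m-1}$. The element $0$ lies in $\operatorname{Im}\phi$ since $\phi(0)=\phi(1)=0$.

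To connect this to $\Upsilon_{1}$, I would invoke Lemma~3.1: an element $a$ lies in $\Upsilon_{1}$ precisely when $x^{2}+x+a=0$ has two nonzero solutions, equivalently when $a=r^{2}+r$ for some $r\in\mathbb{F}_{2^{m}}\setminus\mathbb{F}_{2}$ (because the two roots $r$ and $r+1$ are both nonzero iff $r\notin\{0,1\}$). Therefore $\Upsilon_{1}=\phi(\mathbb{F}_{2^{m}}\setminus\mathbb{F}_{2})=\operatorname{Im}\phi\setminus\{0\}$, and removing the single element $0$ from a set of cardinality $2^{m-1}$ gives $|\Upsilon_{1}|=2^{m-1}-1$.

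I do not anticipate any real obstacle — the statement is essentially the kernel/image count for a $2$-to-$1$ $\mathbb{F}_{2}$-linear map. The only small subtlety is matching the formulation of Lemma~3.1 (two \emph{nonzero} roots, so $a\neq 0$) with the definition of $\Upsilon_{1}$, which is why the count is $2^{m-1}-1$ rather than $2^{m-1}$.
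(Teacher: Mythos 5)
Your proof is correct and is essentially the argument the paper intends: the paper offers no written proof beyond ``By Lemma 3.1,'' and the kernel--image count for the $\mathbb{F}_2$-linear map $r\mapsto r^2+r$ is exactly the right way to fill that in. You also correctly flag and resolve the one real subtlety: taken literally, the displayed definition of $\Upsilon_{1}$ (with $r\in\mathbb{F}_{2^m}^{*}$, hence $r=1$ allowed) would include $a=0$ and give cardinality $2^{m-1}$, so the reading via Lemma 3.1 with $r\in\mathbb{F}_{2^m}\setminus\mathbb{F}_{2}$ --- which is the reading consistent with how $\Upsilon_{1}$ is used in Lemma 3.6 --- is needed to get $2^{m-1}-1$.
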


\begin{lemma}
	Let  $S_{1}=\displaystyle{\sum_{x \in \mathbb{F}_{2^m}^{*}}\sum_{y \in \mathbb{F}_{2^{m}}}}(-1)^{\operatorname{Tr}(a x^dy +b x)}$, then 
	
	$$S_{1}=\left\{\begin{array}{ll}
		-2^{m}, &   if  ~a=0,~b\neq0; \\
		0, &   if ~a\neq0,~b\in\mathbb{F}_{2^m}.
	\end{array}\right.$$
\end{lemma}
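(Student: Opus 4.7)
The plan is to split the double sum according to whether $a=0$ or $a\neq 0$, exchange the order of summation so that the inner sum is over $y$, and then invoke the orthogonality relations \eqref{C1} and \eqref{C2} for the canonical additive character $\chi$.

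First I would treat the case $a=0$, $b\neq 0$. Here the exponent $\operatorname{Tr}(ax^d y + bx)$ collapses to $\operatorname{Tr}(bx)$, which is independent of $y$, so the inner sum over $y\in\mathbb{F}_{2^m}$ simply contributes a factor of $2^m$. What remains is $2^m \sum_{x\in\mathbb{F}_{2^m}^{*}}(-1)^{\operatorname{Tr}(bx)}$. Since $b\neq 0$, the substitution $x\mapsto b^{-1}x$ reduces this to $2^m\sum_{x\in\mathbb{F}_{2^m}^{*}}(-1)^{\operatorname{Tr}(x)}$, and by \eqref{C2} this equals $2^m\cdot(-1)=-2^m$, as claimed.

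Next I would handle the case $a\neq 0$, $b\in\mathbb{F}_{2^m}$. Swapping the order, write
$$S_{1}=\sum_{x\in\mathbb{F}_{2^m}^{*}}(-1)^{\operatorname{Tr}(bx)}\sum_{y\in\mathbb{F}_{2^m}}(-1)^{\operatorname{Tr}(ax^{d}y)}.$$
For any fixed $x\in\mathbb{F}_{2^m}^{*}$, the element $ax^d$ is nonzero, so the map $y\mapsto \operatorname{Tr}(ax^d y)$ is a nontrivial $\mathbb{F}_{2}$-linear functional on $\mathbb{F}_{2^m}$. By the orthogonality relation \eqref{C1}, the inner sum over $y$ vanishes identically. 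Therefore $S_{1}=0$ in this case.

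The argument is entirely elementary; there is no real obstacle, since once the order of summation is swapped the orthogonality of the additive character handles everything. The only point deserving care is keeping track of which set $x$ ranges over (note $x\neq 0$ is essential in the second case to guarantee $ax^d\neq 0$) and noting that the stated ranges of $b$ in the two cases of the conclusion are automatically covered by the split above.
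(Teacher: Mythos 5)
Your proof is correct and follows essentially the same route as the paper: factor out the inner sum over $y$, use \eqref{C1} to kill it when $ax^{d}\neq 0$, and use \eqref{C2} (via the substitution $x\mapsto b^{-1}x$, which the paper leaves implicit) in the case $a=0$, $b\neq 0$. No issues.
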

\begin{proof}
(i) If $a=0$ and $b\neq0$,~then
\begin{align}\label{C3}
	S_{1}=\sum_{x \in \mathbb{F}_{2^m}^{*}}(-1)^{\operatorname{Tr}(b x)} \sum_{y \in \mathbb{F}_{2^{m}}}(-1)^{\operatorname{Tr}(0)}=2^{m} \sum_{x \in \mathbb{F}_{2^m}^{*}}(-1)^{\operatorname{Tr}(bx)},
\end{align}
we know that 
\begin{align}\label{C4}
	\sum_{x \in \mathbb{F}_{2^m}^{*}}(-1)^{\operatorname{Tr}(bx)}=-1
\end{align} 
by (\ref{C2}), so that $S_{1}=-2^{m}$ by (\ref{C3})-(\ref{C4}).

(ii) If $a\neq0$ and $b\in\mathbb{F}_{2^m}$,~then
\begin{align}\label{C5}
	S_{1}=\sum_{x \in \mathbb{F}_{2^m}^{*}}(-1)^{\operatorname{Tr}(b x)} \sum_{y \in \mathbb{F}_{2^{m}}}(-1)^{\operatorname{Tr}(ax^d y)},
\end{align}
when $x\in\mathbb{F}_{2^m}^{*}$, we know that
\begin{align}\label{C6}
	\displaystyle{\sum_{y \in \mathbb{F}_{2^{m}}}}(-1)^{\operatorname{Tr}(ax^{d}y)}=0
\end{align}  
by (\ref{C1}), so that $S_{2}=0$ by (\ref{C5})-(\ref{C6}).
\end{proof}

\begin{lemma}
	Let  $S_{2}=\displaystyle{\sum_{x \in \mathbb{F}_{2^m}^{*}}\sum_{y \in \mathbb{F}_{2^{m}}^{*}}}(-1)^{\operatorname{Tr}(a x^dy +b x)}$, then 
	
	$$S_{2}=\left\{\begin{array}{ll}
		-2^{m}+1, &   if  ~a=0~ and~b\neq0~or~a\neq0~ and~b=0; \\
		1, &   if ~a\neq0,~b\neq0.
	\end{array}\right.$$
\end{lemma}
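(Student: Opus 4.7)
The plan is to reduce $S_2$ to $S_1$ (Lemma 3.3) by separating off the contribution from $y=0$. Since the summation variable $y$ in $S_2$ ranges over $\mathbb{F}_{2^m}^{*}$ while $S_1$ has $y$ ranging over $\mathbb{F}_{2^m}$, we have the identity
\[
S_{2} \;=\; S_{1} \;-\; \sum_{x \in \mathbb{F}_{2^m}^{*}} (-1)^{\operatorname{Tr}(a\cdot x^{d}\cdot 0 + bx)} \;=\; S_{1} \;-\; \sum_{x \in \mathbb{F}_{2^m}^{*}} (-1)^{\operatorname{Tr}(bx)}.
\]
This reduces the problem to evaluating the inner $x$-sum, which is governed entirely by the orthogonality relations (\ref{C1})--(\ref{C2}).

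Next, I would split into the three cases specified by the statement. In the case $a=0,\ b\neq 0$, Lemma 3.3 gives $S_1 = -2^m$ and (\ref{C2}) gives $\sum_{x\in \mathbb{F}_{2^m}^{*}} (-1)^{\operatorname{Tr}(bx)} = -1$, so $S_2 = -2^m - (-1) = -2^m+1$. In the case $a\neq 0,\ b=0$, Lemma 3.3 gives $S_1=0$, while the $y=0$ contribution becomes $\sum_{x \in \mathbb{F}_{2^m}^{*}} 1 = 2^m -1$, hence $S_2 = 0 - (2^m-1) = -2^m + 1$. In the case $a\neq 0,\ b\neq 0$, Lemma 3.3 again gives $S_1 = 0$ and the $y=0$ contribution is $-1$, so $S_2 = 0-(-1) = 1$.

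There is no real obstacle here: the computation is a routine application of Lemma 3.3 combined with the two orthogonality formulas. The only subtle point is to remember that when $b=0$ the extracted $y=0$ sum over $\mathbb{F}_{2^m}^{*}$ is $2^m-1$ rather than $-1$, which is precisely what ensures the two sub-cases of the first branch of the piecewise formula coincide. Alternatively, one could avoid Lemma 3.3 altogether and verify each case directly by switching the order of summation and applying (\ref{C1})--(\ref{C2}), but the reduction to $S_1$ is cleaner and reuses work already done.
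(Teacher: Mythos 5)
Your proof is correct, and it takes a slightly different route from the paper's. The paper proves the lemma by direct computation in each of the three cases: it factors the double sum as $\bigl(\sum_{x}(-1)^{\operatorname{Tr}(bx)}\bigr)\bigl(\sum_{y}(-1)^{\operatorname{Tr}(ax^{d}y)}\bigr)$ and applies the orthogonality relation over $\mathbb{F}_{2^m}^{*}$ to whichever factor is nontrivial. You instead observe that $S_2$ differs from $S_1$ (Lemma 3.3) only by the $y=0$ slice, write $S_{2}=S_{1}-\sum_{x\in\mathbb{F}_{2^m}^{*}}(-1)^{\operatorname{Tr}(bx)}$, and then read off $S_1$ from Lemma 3.3 while evaluating the correction term by orthogonality (correctly distinguishing the $b=0$ case, where the correction is $2^m-1$ rather than $-1$). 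Your reduction reuses already-established work and makes transparent why the two sub-cases of the first branch agree; the paper's direct factorization is self-contained and does not depend on Lemma 3.3. Both arguments are sound and of comparable length, and all three case values you obtain ($-2^m+1$, $-2^m+1$, $1$) match the paper's.
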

\begin{proof}
(i) If $a=0$ and $b\neq0$,~then
\begin{align}
	S_{2}=\sum_{x \in \mathbb{F}_{2^m}^{*}}(-1)^{\operatorname{Tr}(b x)} \sum_{y \in \mathbb{F}_{2^{m}}^{*}}(-1)^{\operatorname{Tr}(0)}=(2^{m}-1) \sum_{x \in \mathbb{F}_{2^m}^{*}}(-1)^{\operatorname{Tr}(bx)},\notag
\end{align}
in the same proof as that of (\ref{C4}), we can get  $S_{2}=-2^m+1.$

(ii) If $a\neq0$ and $b=0$,~then
\begin{align}
	S_{2}=\sum_{x \in \mathbb{F}_{2^m}^{*}}(-1)^{\operatorname{Tr}(0)} \sum_{y \in \mathbb{F}_{2^{m}}^{*}}(-1)^{\operatorname{Tr}(ax^d y)},\notag
\end{align}
in the same proof as that of (\ref{C4}), we can get  $S_{2}=-2^m+1.$	

(iii) If $a\neq0$ and $b\neq0$,~then
\begin{align}
	S_{2}=\sum_{x \in \mathbb{F}_{2^m}^{*}}(-1)^{\operatorname{Tr}(bx)} \sum_{y \in \mathbb{F}_{2^{m}}^{*}}(-1)^{\operatorname{Tr}(ax^d y)},\notag
\end{align}
in the same proof as that of (\ref{C4}), we can get  $S_{2}=1.$
\end{proof}

\begin{lemma}
	Let  $S_{3}=\displaystyle{\sum_{x \in \mathbb{F}_{2^m}^{*}}\sum_{y \in \mathbb{F}_{2^{m}}}}(-1)^{\operatorname{Tr}(x^{d+1}y+x^{d}y+x)+\operatorname{Tr}(ax^dy+b x)}$, then
	
	\footnotesize{$$S_{3}=\left\{\begin{array}{ll}
			0, &   if ~a=1,~b\in\mathbb{F}_{2^m};  \\
			2^{m}, &   if ~a=0,~b\neq0~and~\operatorname{Tr}(b+1)=0~or~ a\in\mathbb{F}_{2^m}\setminus \mathbb{F}_{2}, ~b\in\mathbb{F}_{2^m}~and~\operatorname{Tr}((a+1)(b+1))=0;\\
			-2^m,  &    if~a=0,~b\neq0~and ~\operatorname{Tr}(b+1)=1~or~a\in\mathbb{F}_{2^m}\setminus \mathbb{F}_{2}, ~b\in\mathbb{F}_{2^m}~and~\operatorname{Tr}((a+1)(b+1))=1.\\
		\end{array}\right.$$}
\end{lemma}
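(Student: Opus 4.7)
The plan is to collapse $S_{3}$ to a single-variable expression by orthogonality of the additive character in the $y$-variable, followed by a short case analysis on $a$.

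First I would combine the two trace arguments using additivity (Lemma~2.1~(c)) together with $1+1=0$ in $\mathbb{F}_{2^{m}}$, rewriting the exponent as a single trace
\begin{equation*}
\operatorname{Tr}\bigl(x^{d+1}y+(a+1)x^{d}y+(b+1)x\bigr).
\end{equation*}
Pulling out the $y$-independent factor and swapping the order of summation gives
\begin{equation*}
S_{3}=\sum_{x\in\mathbb{F}_{2^{m}}^{*}}(-1)^{\operatorname{Tr}((b+1)x)}\sum_{y\in\mathbb{F}_{2^{m}}}(-1)^{\operatorname{Tr}((x^{d+1}+(a+1)x^{d})y)}.
\end{equation*}
By the orthogonality relation (\ref{C1}), the inner $y$-sum equals $2^{m}$ when $x^{d+1}+(a+1)x^{d}=0$ and vanishes otherwise. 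Factoring $x^{d}(x+a+1)=0$ together with $x\neq 0$ forces $x=a+1$.

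Next I would split on whether $a+1$ lies in $\mathbb{F}_{2^{m}}^{*}$. If $a=1$, then $a+1=0$ is excluded from the range of $x$, no term survives, and $S_{3}=0$, which is the first case of the lemma. If $a\neq 1$, the unique surviving value $x=a+1$ yields
\begin{equation*}
S_{3}=2^{m}(-1)^{\operatorname{Tr}((a+1)(b+1))}.
\end{equation*}
To match the lemma's remaining cases, I would then subdivide $a\neq 1$ into $a=0$, where $(a+1)(b+1)$ reduces to $b+1$ so that $S_{3}=\pm 2^{m}$ according to the parity of $\operatorname{Tr}(b+1)$, and $a\in\mathbb{F}_{2^{m}}\setminus\mathbb{F}_{2}$, where the exponent is $\operatorname{Tr}((a+1)(b+1))$ exactly as stated, again giving $\pm 2^{m}$.

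There is no real obstacle; the proof is essentially two applications of orthogonality and a factorization. The only delicacies are the characteristic-$2$ sign bookkeeping when merging the two traces (so that coefficients like $a+1$ and $b+1$ emerge with the right signs) and ensuring that the value $x=0$ is correctly excluded in the $a=1$ subcase so that no spurious contribution slips in.
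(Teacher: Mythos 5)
Your proof is correct and follows essentially the same route as the paper: after merging the two traces, orthogonality of the additive character in $y$ forces $x^{d}(x+a+1)=0$, leaving only the term $x=a+1$ (or nothing when $a=1$), which yields $S_{3}=2^{m}(-1)^{\operatorname{Tr}((a+1)(b+1))}$ exactly as in the paper's case analysis. The only cosmetic difference is that the paper separates $a=0$ from $a\neq0$ at the outset, whereas you treat both uniformly and specialize at the end.
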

\begin{proof}
(i) If $a=0$ and $b\neq0$,~then
\begin{align}\label{C7}
	S_{3}&=\sum_{x \in \mathbb{F}_{2^m}^{*}}(-1)^{\operatorname{Tr}((b+1)x)} \sum_{y \in \mathbb{F}_{2^{m}}}(-1)^{\operatorname{Tr}(x^d(x+1)y)}\notag\\
	&=(-1)^{\operatorname{Tr}(b+1)} \sum_{y \in \mathbb{F}_{2^{m}}}(-1)^{\operatorname{Tr}(1(1+1)y)}\notag\\
	&~~~+\sum_{x \in \mathbb{F}_{2^m}\setminus\mathbb{F}_{2}}(-1)^{\operatorname{Tr}((b+1)x)} \sum_{y \in \mathbb{F}_{2^{m}}}(-1)^{\operatorname{Tr}(x^{d}(x+1)y)}\notag\\
	&=(-1)^{\operatorname{Tr}(b+1)}2^{m}=\left\{\begin{array}{ll}
		2^{m}, &  if~\operatorname{Tr}(b+1)=0; \\
		-2^{m}, &  if ~\operatorname{Tr}(b+1)=1.
	\end{array}\right.
\end{align}

(ii) If $a\neq0$ and $b\in\mathbb{F}_{2^m}$,~then
\begin{align}
	S_{3}&=\sum_{x \in \mathbb{F}_{2^m}^{*}} (-1)^{\operatorname{Tr}((b+1)x)}\sum_{y \in \mathbb{F}_{2^{m}}}(-1)^{\operatorname{Tr}(x^d(x+a+1)y)},\notag
\end{align}	
in the same proof as that of (\ref{C6})-(\ref{C7}), we can get 
\begin{align}
	S_{3}&=\left\{\begin{array}{ll}
		0, & if ~a=1;  \notag\\
		2^m, & if~ a\neq1, ~\operatorname{Tr}((a+1)(b+1))=0; \notag\\
		-2^m, & if~ a\neq1, ~\operatorname{Tr}((a+1)(b+1))=1. 
	\end{array}\right.	
\end{align}	
\end{proof}

\begin{lemma}
	Let  $S_{4}=\displaystyle{\sum_{x \in \mathbb{F}_{2^m}^{*}}\sum_{y \in \mathbb{F}_{2^{m}}^{*}}}(-1)^{\operatorname{Tr}( x^{d+2}y+x^{d+1}y)+\operatorname{Tr}(ax^dy+bx)}$, then
	
	\footnotesize{$$S_{4}=\left\{\begin{array}{ll}
			1, &   if  ~a\neq0,~a\notin\Upsilon_{1}~and~b\neq0~or~a\in\Upsilon_{1},~b\neq0~and~Tr(b)=1;\\
			2^{m}+1, &   if~a=0,~b\neq0~and~\operatorname{Tr}(b)=0~or~a\in\Upsilon_{1}~and~b=0;\\
			-2^m+1,  &   if~a=0,~b\neq0~and~\operatorname{Tr}(b)=1~or~a\neq0,~a\notin\Upsilon_{1}~and~b=0;\\
			2^{m+1}+1~or~-2^{m+1}+1,  &   if~a\in\Upsilon_{1},~b\neq0~and~\operatorname{Tr}(b)=0.\\
		\end{array}\right.$$}
\end{lemma}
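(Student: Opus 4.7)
My plan is to follow the same strategy used for $S_{3}$ in Lemma 3.4, only with the different quadratic factor that arises here. The crucial simplification is to rewrite the exponent as
$$
\operatorname{Tr}\bigl(x^{d+2}y+x^{d+1}y+ax^{d}y+bx\bigr)=\operatorname{Tr}\bigl(x^{d}(x^{2}+x+a)y+bx\bigr),
$$
which lets one factor the double sum as
$$
S_{4}=\sum_{x\in\mathbb{F}_{2^{m}}^{*}}(-1)^{\operatorname{Tr}(bx)}\sum_{y\in\mathbb{F}_{2^{m}}^{*}}(-1)^{\operatorname{Tr}\bigl(x^{d}(x^{2}+x+a)y\bigr)}.
$$
Since $x^{d}$ is invertible for $x\in\mathbb{F}_{2^{m}}^{*}$, (\ref{C1}) and (\ref{C2}) show that the inner sum equals $2^{m}-1$ when $x^{2}+x+a=0$ and $-1$ otherwise. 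Setting $N_{a}=\{x\in\mathbb{F}_{2^{m}}^{*}:x^{2}+x+a=0\}$, this collapses to the clean identity
$$
S_{4}=2^{m}\sum_{x\in N_{a}}(-1)^{\operatorname{Tr}(bx)}\;-\;\sum_{x\in\mathbb{F}_{2^{m}}^{*}}(-1)^{\operatorname{Tr}(bx)}.
$$

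The rest is bookkeeping. Lemma 3.1 and Corollary 3.2 describe $N_{a}$ exactly: $N_{0}=\{1\}$; $N_{a}=\{r,r+1\}$ whenever $a=r^{2}+r\in\Upsilon_{1}$; and $N_{a}=\emptyset$ otherwise. By (\ref{C1})--(\ref{C2}) the second sum on the right equals $2^{m}-1$ for $b=0$ and $-1$ for $b\neq 0$. Feeding these in, the four branches of the lemma arise as follows. For $a=0$ one has $\sum_{x\in N_{a}}(-1)^{\operatorname{Tr}(bx)}=(-1)^{\operatorname{Tr}(b)}$, yielding $S_{4}=\pm 2^{m}+1$ according to $\operatorname{Tr}(b)$. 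For $a\in\mathbb{F}_{2^{m}}^{*}\setminus\Upsilon_{1}$ the first term vanishes, leaving $S_{4}=1$ or $S_{4}=-2^{m}+1$ depending on whether $b\neq 0$ or $b=0$. For $a\in\Upsilon_{1}$ with $b=0$ the first term equals $2\cdot 2^{m}$, giving $2^{m}+1$. Finally, for $a\in\Upsilon_{1}$ with $b\neq 0$, the identity
$$
\sum_{x\in N_{a}}(-1)^{\operatorname{Tr}(bx)}=(-1)^{\operatorname{Tr}(br)}\bigl(1+(-1)^{\operatorname{Tr}(b)}\bigr)
$$
produces $0$ when $\operatorname{Tr}(b)=1$ (so $S_{4}=1$) and $\pm 2$ when $\operatorname{Tr}(b)=0$ (so $S_{4}=\pm 2^{m+1}+1$).

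There is no deep obstacle; the only subtlety worth flagging is the sign $(-1)^{\operatorname{Tr}(br)}$ in the last sub-case. Swapping $r$ for the other root $r+1$ sends $\operatorname{Tr}(br)$ to $\operatorname{Tr}(br)+\operatorname{Tr}(b)$, which leaves the sign unchanged precisely because we are in the branch $\operatorname{Tr}(b)=0$. Thus the sign is a well-defined function of $(a,b)$ but genuinely depends on the pair, which is exactly why the statement records the value ambiguously as ``$2^{m+1}+1$ or $-2^{m+1}+1$''. Everything else follows immediately from (\ref{C1})--(\ref{C2}) together with the explicit description of $\Upsilon_{1}$ supplied by Lemma 3.1 and Corollary 3.2.
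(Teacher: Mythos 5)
Your proof is correct and follows essentially the same route as the paper: factor the exponent as $\operatorname{Tr}(x^{d}(x^{2}+x+a)y+bx)$, evaluate the inner sum over $y\in\mathbb{F}_{2^{m}}^{*}$ according to whether $x$ is a root of $x^{2}+x+a$, and finish with Lemma 3.1 and Vieta ($x_{1}+x_{2}=1$) exactly as the paper does in its three cases. Your single master identity $S_{4}=2^{m}\sum_{x\in N_{a}}(-1)^{\operatorname{Tr}(bx)}-\sum_{x\in\mathbb{F}_{2^{m}}^{*}}(-1)^{\operatorname{Tr}(bx)}$ is a cleaner packaging of the same computation, and your remark on the well-definedness of the sign in the last sub-case is a worthwhile point the paper leaves implicit.
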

\begin{proof}
(i) If $a=0$ and $b\neq0$,~then
\begin{align}
	S_{4}&=\sum_{x \in \mathbb{F}_{2^m}^{*}}(-1)^{\operatorname{Tr}(bx)} \sum_{y \in \mathbb{F}_{2^{m}}^{*}}(-1)^{\operatorname{Tr}(x^{d+1}(x+1)y)}\notag\\
	&=(-1)^{\operatorname{Tr}(b)} \sum_{y \in \mathbb{F}_{2^{m}}^{*}}(-1)^{\operatorname{Tr}(1(1+1)y)}\notag\\
	&+\sum_{x \in \mathbb{F}_{2^m}\setminus\mathbb{F}_{2}}(-1)^{\operatorname{Tr} (bx)} \sum_{y \in \mathbb{F}_{2^{m}}^{*}}(-1)^{\operatorname{Tr}(x^{d+1}(x+1)y)},\notag
\end{align} 	
in the same proof as that of (\ref{C4}) and (\ref{C7}), we can get 
\begin{align}
	S_{4}&=\left\{\begin{array}{ll}
		2^{m}+1, &  if ~ \operatorname{Tr}(b)=0; \notag\\
		-2^{m}+1, &  if ~ \operatorname{Tr}(b)=1.\notag
	\end{array}\right.
\end{align}

(ii) If $a\neq0$ and $b=0$,~then
\begin{align}
	S_{4}&=\sum_{x \in \mathbb{F}_{2^m}^{*}}(-1)^{\operatorname{Tr}(0)} \sum_{y \in \mathbb{F}_{2^{m}}^{*}}(-1)^{\operatorname{Tr}(x^{d}(x^2+x+a)y)},\notag
\end{align}
when $a\notin\Upsilon_{1}$, in the same proof as that of (\ref{C4}), we can get $S_{4}=-2^{m}+1$;
when $a\in\Upsilon_{1}$, we can get 
\begin{align}
	S_{4}&=((-1)^{\operatorname{Tr}(0)}+ (-1)^{\operatorname{Tr}(0)})(2^m-1)+(-1)\sum_{x \in \mathbb{F}_{2^m}^{*}\setminus\left\{x_{1},x_{2}\right\}}(-1)^{\operatorname{Tr}(0)} \notag\\
	&=2(2^m-1)-(2^m-3)\notag\\
	&=2^{m}+1,\notag
\end{align}
where $x_{1}$ and $x_{2}$ are both nonzero solutions of $x^{2}+ x+a=0$. Thus we have
\begin{align}
	S_{4}&=\left\{\begin{array}{ll}
		2^{m}+1, &  if ~a\in\Upsilon_{1};\notag \\
		-2^{m}+1, &  if  ~a\notin\Upsilon_{1}.\notag
	\end{array}\right.
\end{align}

(iii) If $a\neq0$ and $b\neq0$,~then
\begin{align}
	S_{4}&=\sum_{x \in \mathbb{F}_{2^m}^{*}}(-1)^{\operatorname{Tr}(bx)} \sum_{y \in \mathbb{F}_{2^{m}}^{*}}(-1)^{\operatorname{Tr}(x^{d}(x^2+x+a)y)},\notag
\end{align}
when $a\notin\Upsilon_{1}$, in the same proof as that of (\ref{C4}), we can get $S_{4}=1$;
when $a\in\Upsilon_{1}$, we can get 
\begin{align}
	S_{4}&=((-1)^{\operatorname{Tr}(bx_{1})}+ (-1)^{\operatorname{Tr}(bx_{2})})(2^m-1)+(-1)\sum_{x \in \mathbb{F}_{2^m}^{*}\setminus\left\{x_{1},x_{2}\right\}}(-1)^{\operatorname{Tr}(bx)} \notag\\
	&=((-1)^{\operatorname{Tr}(bx_{1})}+(-1)^{\operatorname{Tr}(bx_{2})})(2^m-1)-(-1-(-1)^{\operatorname{Tr}(bx_{1})}-(-1)^{\operatorname{Tr}(bx_{2})}),\notag	
\end{align}
according to Vieta Theorem, $x_{1}+x_{2}=1$. And then we can derive
\begin{align}
	(-1)^{\operatorname{Tr}(bx_{1})}+(-1)^{\operatorname{Tr}(bx_{2})}&=\left\{\begin{array}{ll}
		2~ or~-2, & if ~(-1)^{\operatorname{Tr}(b)}=1;  \notag\\
		0, & if ~(-1)^{\operatorname{Tr}(b)}=-1.
	\end{array}\right.\notag\\
	&=\left\{\begin{array}{ll}
		2 ~or~-2, & if ~\operatorname{Tr}(b)=0;  \notag\\
		0, & if ~\operatorname{Tr}(b)=1.
	\end{array}\right. 
\end{align}	

\noindent Thus we have
$$	S_{4}=\left\{\begin{array}{ll}
	2^{m+1}+1~ or ~-2^{m+1}+1, & if~a\in\Upsilon_{1},~\operatorname{Tr}(b)=0;  \notag\\
	1, & if ~a\notin\Upsilon_{1} ~or~ a\in\Upsilon_{1} ~and~\operatorname{Tr}(b)=1.
\end{array}\right.$$
\end{proof}

\begin{lemma}	Let  $S_{5}=\displaystyle{\sum_{x \in \mathbb{F}_{2^m}^{*}}\sum_{y \in \mathbb{F}_{2^{m}}}}(-1)^{\operatorname{Tr}(x^{d+1}y)+\operatorname{Tr}(ax^dy+b x)},$
	\begin{align}
		&S_{6}=\displaystyle{\sum_{x \in \mathbb{F}_{2^m}^{*}}\sum_{y \in \mathbb{F}_{2^{m}}}}(-1)^{\operatorname{Tr}(x^{d}y)-1+\operatorname{Tr}(ax^dy+b x)},\notag\\ 
		&S_{7}=\displaystyle{\sum_{x \in \mathbb{F}_{2^m}^{*}}\sum_{y \in \mathbb{F}_{2^{m}}}}(-1)^{\operatorname{Tr}(x^{d+1}y)+\operatorname{Tr}(x^{d}y)-1+\operatorname{Tr}(ax^dy+b x)}.\notag
	\end{align}
	Then
	\begin{align}
		&S_{5}=\left\{\begin{array}{ll}
			0, &   if ~a=0,~b\neq0;  \\
			2^{m}, &   if ~a\neq0,~b\in\mathbb{F}_{2^m}~and~\operatorname{Tr}(ab)=0;\\
			-2^m,  &   if ~a\neq0,~b\in\mathbb{F}_{2^m}~and~\operatorname{Tr}(ab)=1.\\
		\end{array}\right.\notag\\
		&S_{6}=\left\{\begin{array}{ll}
			0, &   if ~a\neq1,~b\in\mathbb{F}_{2^m};  \\
			2^{m}, &   if ~a=1,~b\neq0;\\
			-2^{2m}+2^m,  &   if ~a=1,~b=0.\\
		\end{array}\right.\notag\\
		&S_{7}=\left\{\begin{array}{ll}
			0, &   if ~a=1,~b\in\mathbb{F}_{2^m};  \\
			2^{m}, &   if ~a\neq1,~b\in\mathbb{F}_{2^m}~and~\operatorname{Tr}((a+1)b)=1;\\
			-2^m,  &   if ~a\neq1,~b\in\mathbb{F}_{2^m}~and~\operatorname{Tr}((a+1)b)=0.\\
		\end{array}\right.\notag
	\end{align}
	
\end{lemma}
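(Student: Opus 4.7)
The plan is to handle all three sums by the same routine: factor the exponent so that $y$ appears only linearly, apply the orthogonality relation (\ref{C1}) to the inner sum over $y$, and then identify the handful of $x\in\mathbb{F}_{2^m}^{*}$ for which the coefficient of $y$ vanishes, so that the inner sum contributes $2^m$ instead of $0$.

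For $S_5$, regrouping gives
\[
S_5=\sum_{x\in\mathbb{F}_{2^m}^{*}}(-1)^{\operatorname{Tr}(bx)}\sum_{y\in\mathbb{F}_{2^m}}(-1)^{\operatorname{Tr}(x^d(x+a)y)}.
\]
By (\ref{C1}) the inner sum equals $2^m$ exactly when $x^d(x+a)=0$, which (using $x\ne 0$ and characteristic $2$) forces $x=a$. If $a=0$ no such $x$ exists, so $S_5=0$; if $a\ne 0$ only the single term $x=a$ survives, yielding $S_5=2^m(-1)^{\operatorname{Tr}(ab)}=\pm 2^m$ according to the parity of $\operatorname{Tr}(ab)$, exactly as claimed.

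For $S_6$, I would first pull out the overall sign $(-1)^{-1}=-1$ and then collect the $y$-coefficient as $(1+a)x^d$. The inner $y$-sum vanishes for every $x\in\mathbb{F}_{2^m}^{*}$ unless $a=1$; when $a=1$ it equals $2^m$ for all such $x$, so $S_6=-2^m\sum_{x\in\mathbb{F}_{2^m}^{*}}(-1)^{\operatorname{Tr}(bx)}$, which by (\ref{C2}) (or a direct count for $b=0$) gives $-2^m(2^m-1)=-2^{2m}+2^m$ when $b=0$ and $2^m$ when $b\ne 0$. For $S_7$, the same maneuver — including the prefactor $(-1)^{-1}=-1$ — produces the $y$-coefficient $x^d(x+a+1)$; the surviving value is $x=a+1$, which lies in $\mathbb{F}_{2^m}^{*}$ iff $a\ne 1$. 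Hence $S_7=0$ when $a=1$, and otherwise $S_7=-2^m(-1)^{\operatorname{Tr}((a+1)b)}$, which unpacks to the two stated subcases based on the parity of $\operatorname{Tr}((a+1)b)$.

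There is no real obstacle; the only potential pitfall is the bookkeeping in characteristic $2$: applying $x+a=0\iff x=a$, silently absorbing doubled $\operatorname{Tr}$-terms when the $y$-coefficients are collected, and remembering not to drop the constant prefactor $(-1)^{-1}=-1$ in $S_6$ and $S_7$. Beyond that, the whole lemma rests on nothing more than the two orthogonality identities (\ref{C1}) and (\ref{C2}).
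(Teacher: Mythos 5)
Your proposal is correct and follows essentially the same route as the paper: regroup so that $y$ appears linearly with coefficient $x^d(x+a)$, $(1+a)x^d$, or $x^d(x+a+1)$ respectively, kill the inner sum by orthogonality except at the single surviving $x$, and split on whether that $x$ lies in $\mathbb{F}_{2^m}^{*}$. The case bookkeeping (including the $(-1)^{-1}$ prefactor and the $b=0$ subcase of $S_6$) matches the paper's computation.
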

\begin{proof}
(i) If $a=0$ and $b\neq0$,~then
\begin{align}
	S_{5}=\sum_{x \in \mathbb{F}_{2^m}^{*}}(-1)^{\operatorname{Tr}(bx)} \sum_{y \in \mathbb{F}_{2^{m}}}(-1)^{\operatorname{Tr}(x^{d+1}y)},\notag
\end{align}
in the same proof as that of (\ref{C6}), we have  $S_{5}=0$. Next we compute
\begin{align}
	S_{6}=(-1)^{-1}\sum_{x \in \mathbb{F}_{2^m}^{*}}(-1)^{\operatorname{Tr}(bx)} \sum_{y \in \mathbb{F}_{2^{m}}}(-1)^{\operatorname{Tr}(x^{d}y)},\notag
\end{align}
in the same proof as that of (\ref{C6}), we have  $S_{6}=0$. Now we consider

\begin{align}
	S_{7}&=(-1)^{-1}\sum_{x \in \mathbb{F}_{2^m}^{*}}(-1)^{\operatorname{Tr}(bx)} \sum_{y \in \mathbb{F}_{2^{m}}}(-1)^{\operatorname{Tr}(x^{d}(x+1)y)},\notag
\end{align}
in the same proof as that of (\ref{C7}), we can get 
\begin{align}
	S_{7}&=\left\{\begin{array}{ll}
		2^m, & if ~\operatorname{Tr}(b)=1; \notag\\
		-2^m, & if ~\operatorname{Tr}(b)=0. 
	\end{array}\right.
\end{align}

(ii) If $a\neq0$ and $b\in\mathbb{F}_{2^m}$,~then
\begin{align}
	S_{5}&=\sum_{x \in \mathbb{F}_{2^m}^{*}} (-1)^{\operatorname{Tr}(bx)}\sum_{y \in \mathbb{F}_{2^{m}}}(-1)^{\operatorname{Tr}(x^d(x+a)y)},\notag
\end{align}	
in the same proof as that of (\ref{C7}), we can get 
\begin{align}
	S_{5}&=\left\{\begin{array}{ll}
		2^m, & if ~\operatorname{Tr}(ab)=0; \notag\\
		-2^m, & if ~\operatorname{Tr}(ab)=1. 
	\end{array}\right.
\end{align}	
Next we compute
\begin{align}
	S_{6}&=(-1)^{-1}\sum_{x \in \mathbb{F}_{2^m}^{*}} (-1)^{\operatorname{Tr}(bx)}\sum_{y \in \mathbb{F}_{2^{m}}}(-1)^{\operatorname{Tr}(x^d(a+1)y)}.\notag
\end{align}	
When $a\neq1$, in the same proof as that of  (\ref{C6}) we can get  $S_{6}=0$;
when $a=1$ and $b\neq0$, in the same proof as that of  (\ref{C3})-(\ref{C4}) we can get  $S_{6}=2^m$;
when $a=1$ and $b=0$, we can get 
\begin{align}\label{C8}
	S_{6}&=(-1)^{-1}\sum_{x \in \mathbb{F}_{2^m}^{*}} (-1)^{\operatorname{Tr}(0)}\sum_{y \in \mathbb{F}_{2^{m}}}(-1)^{\operatorname{Tr}(0)}=-(2^m-1)2^m=-2^{2m}+2^m.\notag
\end{align}	

\noindent Thus we can get
\begin{align}
	S_{6}&=\left\{\begin{array}{ll}
		-2^{2m}+2^m, & if ~a=1,~b=0; \notag\\
		2^m, & if ~a=1,~b\neq0; \notag\\
		0, & if ~a\neq1. 
	\end{array}\right.
\end{align}	
Now we consider
\begin{align}
	S_{7}&=(-1)^{-1}\sum_{x \in \mathbb{F}_{2^m}^{*}} (-1)^{\operatorname{Tr}(bx)}\sum_{y \in \mathbb{F}_{2^{m}}}(-1)^{\operatorname{Tr}(x^d(x+a+1)y)},\notag
\end{align}
in the same proof as that of (\ref{C6})-(\ref{C7}), we can get 
\begin{align}
	S_{7}&=\left\{\begin{array}{ll}
		2^m, & if ~a\neq1,~\operatorname{Tr}((a+1)b)=1; \notag\\
		-2^m, & if ~a\neq1,~\operatorname{Tr}((a+1)b)=0; \notag\\
		0, & if ~a=1. 
	\end{array}\right.
\end{align}
\end{proof}

  Lemma $3.8$ is useful for determining whether  $C_{D_{i}}$ is projective, where $i=1,2,3$.
\begin{lemma}
	For $D_{i} \subseteq \mathbb{F}_{p^{m}} \times \mathbb{F}_{p^{m}} \backslash\{0,0\}$, $C_{D_{i}}$ is projective, where $i=1,2,3$.
\end{lemma}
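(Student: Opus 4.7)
The plan is to unpack what projectivity means for a binary linear code of the form $C_{D_i}$ and then verify it directly from the structure of $D_i$. A binary linear code is projective, i.e.\ $d^{\perp}\ge 3$, iff the columns of a generator matrix are pairwise distinct and none is the zero column. For $C_{D_i}$ the column indexed by $(x,y)\in D_i$ is the $\mathbb{F}_{2}$-linear form
\[
\ell_{x,y}:\mathbb{F}_{2^m}\times\mathbb{F}_{2^m}\longrightarrow\mathbb{F}_{2},\qquad (a,b)\longmapsto \operatorname{Tr}(ax^{d}y+bx),
\]
on the $2m$-dimensional $\mathbb{F}_{2}$-space $\mathbb{F}_{2^m}^{2}$, so the task reduces to showing that the forms $\ell_{x,y}$ are pairwise distinct and nonzero as $(x,y)$ ranges over $D_i$.

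First I would observe that $\ell_{x,y}=0$ means $\operatorname{Tr}(a x^{d}y+bx)=0$ for every $(a,b)\in\mathbb{F}_{2^m}^{2}$. Fixing $b=0$ and letting $a$ vary, the nondegeneracy of the trace form forces $x^{d}y=0$; fixing $a=0$ and letting $b$ vary forces $x=0$. Since in each of $D_{1},D_{2},D_{3}$ one has $x\in\mathbb{F}_{2^m}^{*}$, no column of $C_{D_i}$ is zero.

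Next I would handle pairwise distinctness. Suppose $(x_{1},y_{1}),(x_{2},y_{2})\in D_i$ satisfy $\ell_{x_{1},y_{1}}=\ell_{x_{2},y_{2}}$. Then
\[
\operatorname{Tr}\bigl(a(x_{1}^{d}y_{1}+x_{2}^{d}y_{2})+b(x_{1}+x_{2})\bigr)=0\quad\text{for all }a,b\in\mathbb{F}_{2^m},
\]
and the same nondegeneracy argument, applied coordinatewise, yields $x_{1}=x_{2}$ and $x_{1}^{d}y_{1}=x_{2}^{d}y_{2}$. Since $x_{1}=x_{2}\ne 0$, multiplication by $x_{1}^{d}$ is invertible on $\mathbb{F}_{2^m}$, so $y_{1}=y_{2}$. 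Hence distinct elements of $D_i$ produce distinct columns, and $C_{D_i}$ is projective for $i=1,2,3$.

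I do not expect any genuine obstacle: the defining conditions cutting $D_{1},D_{2},D_{3}$ out of $\mathbb{F}_{2^m}^{*}\times\mathbb{F}_{2^m}$ play no role in the argument, and the whole proof rests on the single structural fact that $x\ne 0$ together with the nondegeneracy of $\operatorname{Tr}$. The only small point to be careful about is writing the argument uniformly for all three defining sets and noting explicitly (for $D_{3}$) that $\operatorname{Tr}(x^{d}y)=1$ forces $y\ne 0$ but that this is not actually needed for the projectivity conclusion.
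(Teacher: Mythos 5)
Your proof is correct and takes essentially the same route as the paper: both arguments reduce projectivity to the columns $(a,b)\mapsto\operatorname{Tr}(ax^{d}y+bx)$ being nonzero and pairwise distinct, and both deduce this from the nondegeneracy of the trace form together with $x\neq 0$. Your write-up is in fact slightly more explicit than the paper's about why $d^{\perp}>1$ and why $x_{1}=x_{2}$ forces $y_{1}=y_{2}$.
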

\begin{proof}
	Suppose that the Hamming distance of $C_{D_{i}}^{\perp}$ is $d^{\perp}$, where $i=1,2,3$. By $(0,0) \notin D_{i}$, we  get $d^{\perp}>1$ directly.
	If $d^{\perp}=2$, then there exists some $\left(x_{t}, y_{t}\right) \in D_{i}(t=1,2)$ with $\left(x_{1}, y_{1}\right) \neq \left(x_{2}, y_{2}\right)$ such that
	\[
	\operatorname{Tr}(a x_{1}^{d} y_{1}+b x_{1})+\operatorname{Tr}(a x_{2}^{d} y_{2}+b x_{2})=0 \text { for any }(a, b) \in \mathbb{F}_{p^{m}} \times \mathbb{F}_{p^{m}} \text {, }
	\]
	namely,
	\[
	\operatorname{Tr}(a(x_{1}^{d} y_{1}+x_{2}^{d} y_{2})+b(x_{1}+x_{2}))=0 \text { for any }(a, b) \in \mathbb{F}_{p^{m}} \times \mathbb{F}_{p^{m}}\text {, }
	\]
	which leads to $\left(x_{1}, y_{1}\right)=\left(x_{2}, y_{2}\right)$, thus $d^{\perp} \neq 2$.
\end{proof}

\section{Our main results and their proofs}\label{h3}
In this section, we give our main results, their proofs and some examples.
\subsection{Our main results}
\begin{theorem}
	For $m$ odd and $m\geq3$, suppose that ${\mathcal{C}_{D_{1}}}$ and ${D_{1}}$ are defined by $(1)$ and $(2)$, respectively, then ${\mathcal{C}_{D_{1}}}$ is a ${[~2^m(2^{m-1}-1), ~2 m,~ 2^{m-2}(2^{m}-3)~]}$ projective four-weight minimal linear code with weight distribution given in Table $1$. 
\end{theorem}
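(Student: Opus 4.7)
The plan is to run the additive-character / orthogonality machinery assembled in Section~3. First I would compute the length $n=|D_1|$ by writing
\[
|D_1|=\frac{1}{2}\sum_{x\in\mathbb{F}_{2^m}^{*}}\sum_{y\in\mathbb{F}_{2^m}}\Bigl(1+(-1)^{\operatorname{Tr}(x^{d+1}y+x^dy+x)}\Bigr).
\]
The inner sum over $y$ vanishes unless $x^d(x+1)=0$, i.e.\ $x=1$; combined with $\operatorname{Tr}(1)=1$ (since $m$ is odd, by Lemma~2.1) this gives $|D_1|=2^m(2^{m-1}-1)$, matching the claimed length. Then for each $(a,b)\in\mathbb{F}_{2^m}^2\setminus\{(0,0)\}$ I would express the weight $w(\mathbf{c}(a,b))=|D_1|-N(a,b)$, where
\[
N(a,b)=\frac{1}{4}\sum_{(x,y)\in\mathbb{F}_{2^m}^{*}\times\mathbb{F}_{2^m}}\Bigl(1+(-1)^{\operatorname{Tr}(x^{d+1}y+x^dy+x)}\Bigr)\Bigl(1+(-1)^{\operatorname{Tr}(ax^dy+bx)}\Bigr),
\]
so that a routine expansion yields $w(\mathbf{c}(a,b))=2^{2m-2}-2^{m-1}-\tfrac14(S_1+S_3)$ with $S_1$ and $S_3$ being exactly the sums evaluated in Lemmas~3.3 and~3.5.

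Next I would run the case analysis dictated by those two lemmas. The four cases $(a=0,b\neq0,\operatorname{Tr}(b+1)=0\text{ or }1)$ and $(a\neq0,b\in\mathbb{F}_{2^m})$ split further by whether $a=1$ or $a\neq1$, and by $\operatorname{Tr}((a+1)(b+1))$; substituting the tabulated values of $S_1,S_3$ collapses the outcomes to exactly four distinct nonzero weights:
\begin{align*}
w_1&=2^{m-2}(2^m-3), & w_2&=2^{2m-2}-2^{m-1},\\
w_3&=2^{2m-2}-2^{m-2}, & w_4&=2^{2m-2}.
\end{align*}
In particular $w_{\min}=w_1=2^{m-2}(2^m-3)$, confirming the minimum distance. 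Since every nonzero $(a,b)$ produces a nonzero codeword, the $\mathbb{F}_2$-linear map $(a,b)\mapsto\mathbf{c}(a,b)$ is injective and the dimension is $2m$.

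Then I would count frequencies by tallying how many $(a,b)$ land in each case: $A_{w_4}=2^{m-1}-1$ (from $a=0$, $\operatorname{Tr}(b)=0$, $b\neq0$); $A_{w_2}=2^{m-1}+2^m=3\cdot 2^{m-1}$ (combining $a=0$, $\operatorname{Tr}(b)=1$ with the whole slice $a=1$); and for $a\in\mathbb{F}_{2^m}\setminus\mathbb{F}_2$ the affine substitution $b\mapsto (a+1)(b+1)$ shows the trace condition is equi-distributed, giving $A_{w_1}=A_{w_3}=(2^m-2)\cdot 2^{m-1}$. A quick check confirms $1+A_{w_1}+A_{w_2}+A_{w_3}+A_{w_4}=2^{2m}$, which rules out any missed case. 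Projectivity is immediate from Lemma~3.8, and minimality follows from Lemma~2.3 once I note
\[
\frac{w_{\min}}{w_{\max}}=\frac{2^{m-2}(2^m-3)}{2^{2m-2}}=\frac{2^m-3}{2^m}>\frac{1}{2}\quad\text{for }m\geq 3.
\]

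The main obstacle is bookkeeping: Lemma~3.5 has a three-way split (with the extra degenerate branch $a=1$) and Lemma~3.3 has a two-way split, so I must be careful to pair them correctly, in particular to recognize that the cases $\{a=0,\operatorname{Tr}(b+1)=0\}$ and $\{a=1\}$ collapse onto the same weight $w_2$, and that the substitution $b\mapsto(a+1)(b+1)$ is a bijection of $\mathbb{F}_{2^m}$ only because $a\neq 1$. Getting these matchings right is what makes the total frequency count close to $2^{2m}$ and pins down the weight distribution unambiguously.
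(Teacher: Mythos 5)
Your proposal is correct and follows essentially the same route as the paper: the length via the orthogonality relations, the weight formula $w=2^{2m-2}-2^{m-1}-\tfrac14(S_1+S_3)$ with $S_1,S_3$ from Lemmas 3.3 and 3.5, the same four-way case split, and the same use of Lemmas 3.8 and 2.3 for projectivity and minimality. Your frequency counts (in particular the equidistribution argument via $b\mapsto(a+1)(b+1)$ for $a\notin\mathbb{F}_2$ and the check that the total is $2^{2m}$) agree with Table 1 and are in fact spelled out in more detail than in the paper, which simply asserts them.
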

$$\begin{array}{l}
	\footnotesize{\text { Table 1. The weight distribution of } \mathcal{C}_{D_{1}}}\\
	\begin{array}{cc}
		\hline \text { weight } & \text { frequency } \\
		\hline 0~ & ~1 \\
		2^{m-2}(2^{m}-3)~ & ~2^{m}(2^{m-1}-1) \\
		2^{m-1}(2^{m-1}-1)~ & ~3\cdot2^{m-1} \\
		2^{m-2}(2^{m}-1)~ & ~2^{m}(2^{m-1}-1) \\
		2^{2m-2}~ & ~2^{m-1}-1 \\
		\hline
	\end{array}
\end{array}$$

\begin{theorem}
		For $m\geq4$, suppose that ${\mathcal{C}_{D_{2}}}$ and ${D_{2}}$ are defined by $(1)$ and $(3)$, respectively, then ${\mathcal{C}_{D_{2}}}$ is a $[~2^{2m-1}-2^m+1, ~2 m,~ 2^{m}(2^{m-2}-1)~]$  projective tree-weight minimal linear code with weight distribution given in Table $2$.
\end{theorem}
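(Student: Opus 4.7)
The plan is to determine the length, weight set, and frequency table of $\mathcal{C}_{D_{2}}$ by feeding the defining set $D_{2}$ into the additive-character machinery of Section~3, and then to read off projectivity and minimality from Lemmas~3.8 and~2.3.

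First I would compute the length via orthogonality~(\ref{C1}): expanding
\[
|D_{2}|=\tfrac{1}{2}\sum_{x\in\mathbb{F}_{2^{m}}^{*}}\sum_{y\in\mathbb{F}_{2^{m}}^{*}}\bigl(1+(-1)^{\operatorname{Tr}(x^{d+2}y+x^{d+1}y)}\bigr),
\]
the first piece contributes $(2^{m}-1)^{2}$ while the second is the $(a,b)=(0,0)$ specialisation of $S_{4}$. This case falls outside Lemma~3.5, so I would handle it by hand: the inner sum over $y$ is nontrivial only for $x=1$ (contributing $2^{m}-1$), with each of the remaining $2^{m}-2$ values of $x$ contributing $-1$, giving a total of $1$. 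Hence $|D_{2}|=2^{2m-1}-2^{m}+1$, matching the claimed length.

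For a codeword $\mathbf{c}(a,b)$ with $(a,b)\ne(0,0)$, the Hamming weight equals $\tfrac12(|D_{2}|-T(a,b))$ where
\[
T(a,b)=\sum_{(x,y)\in D_{2}}(-1)^{\operatorname{Tr}(ax^{d}y+bx)}=\tfrac12\bigl(S_{2}(a,b)+S_{4}(a,b)\bigr).
\]
Substituting the values of $S_{2}$ and $S_{4}$ from Lemmas~3.4 and~3.5 in each of the subcases ($a=0,b\neq0$; $a\neq0,b=0$; $a\neq0,b\neq0$, further split by whether $a\in\Upsilon_{1}$ and by $\operatorname{Tr}(b)$) will collapse to exactly three nonzero weights: $2^{m-1}(2^{m-1}-1)$, $2^{2m-2}$, and $2^{m}(2^{m-2}-1)$. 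Since all nonzero $(a,b)$ produce nonzero codewords, the map $(a,b)\mapsto \mathbf{c}(a,b)$ is injective, so $\dim\mathcal{C}_{D_{2}}=2m$.

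The only delicate frequency count is the branch $a\in\Upsilon_{1}$, $b\neq 0$, $\operatorname{Tr}(b)=0$, where Lemma~3.5 leaves a $\pm 2^{m+1}$ ambiguity. The sign is controlled by $(-1)^{\operatorname{Tr}(bx_{1})}+(-1)^{\operatorname{Tr}(bx_{2})}$ for $x_{1},x_{2}$ the two roots of $x^{2}+x+a=0$; since $x_{1}+x_{2}=1$, the hypothesis $\operatorname{Tr}(b)=0$ forces $\operatorname{Tr}(bx_{1})=\operatorname{Tr}(bx_{2})$, and the split becomes: $\operatorname{Tr}(bx_{1})=0$ yields the minimum weight $2^{m}(2^{m-2}-1)$, while $\operatorname{Tr}(bx_{1})=1$ yields $2^{2m-2}$. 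Because $x_{1}\in\mathbb{F}_{2^{m}}\setminus\mathbb{F}_{2}$, the two $\mathbb{F}_{2}$-linear forms $b\mapsto\operatorname{Tr}(b)$ and $b\mapsto\operatorname{Tr}(bx_{1})$ are independent, cutting out an $\mathbb{F}_{2}$-subspace of size $2^{m-2}$; this produces $2^{m-2}-1$ nonzero $b$ on the $+$ branch and $2^{m-2}$ on the $-$ branch for each of the $2^{m-1}-1$ choices of $a\in\Upsilon_{1}$. Combining this with the straightforward tallies from the other subcases yields the multiplicities in Table~2, which I would sanity-check against $\sum A_{i}=2^{2m}$. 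To finish, projectivity is Lemma~3.8, and minimality follows from Lemma~2.3 since $w_{\min}/w_{\max}=1-2^{2-m}>\tfrac12$ exactly when $m\geq 4$, precisely the standing hypothesis. The main obstacle is this Case-C subanalysis: correctly identifying the sign of $S_{4}$ on each branch and counting the $b$'s, since this is the sole source of the minimum weight and thus dictates whether the code is truly three-weight rather than four-weight.
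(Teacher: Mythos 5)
Your proposal is correct and follows the paper's character-sum framework (length via orthogonality, weight $=\tfrac12 n_2-\tfrac14 S_2-\tfrac14 S_4$, projectivity from Lemma 3.8, minimality from Lemma 2.3 with the threshold $m\geq 4$), but it diverges from the paper at the one genuinely delicate point: the frequency count. The paper does not resolve the $\pm 2^{m+1}+1$ ambiguity in $S_4$ on the branch $a\in\Upsilon_{1}$, $b\neq 0$, $\operatorname{Tr}(b)=0$; instead it reads off $A_{w_6}$ from the unambiguous cases and then determines $A_{w_5}$ and $A_{w_7}$ from the first Pless power moments (Lemma 2.2), using projectivity to set $A_1^{\perp}=A_2^{\perp}=0$. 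You resolve the ambiguity head-on: since $x_1+x_2=1$, the condition $\operatorname{Tr}(b)=0$ forces $\operatorname{Tr}(bx_1)=\operatorname{Tr}(bx_2)$, and the independence of the forms $b\mapsto\operatorname{Tr}(b)$ and $b\mapsto\operatorname{Tr}(bx_1)$ (valid since $x_1\notin\mathbb{F}_2$) gives $2^{m-2}-1$ nonzero $b$ on the minimum-weight branch and $2^{m-2}$ on the $2^{2m-2}$ branch per $a\in\Upsilon_1$; I checked that this yields $A_{w_5}=(2^{m-1}-1)(2^{m-2}-1)=2^{m-2}(2^{m-1}-3)+1$ and, by complement, $A_{w_7}=2^{m-2}(2^{m-1}+3)$, matching Table 2. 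Your route is more self-contained (it does not need the power-moment identities) and yields strictly more information, namely exactly which pairs $(a,b)$ attain each weight; the paper's route is shorter once projectivity is in hand. One cosmetic slip: the formula for $S_4$ is Lemma 3.6 in the paper, not Lemma 3.5 ($S_3$), though your hand computation of the $(a,b)=(0,0)$ specialisation for the length is correct anyway.
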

$$\begin{array}{l}
	~~~\footnotesize{\text { Table 2. The weight distribution of } \mathcal{C}_{D_{2}}}\\
	\begin{array}{cc}
		\hline \text { weight } & \text { frequency } \\
		\hline 0~ & ~1 \\
		2^{m}(2^{m-2}-1)~ & ~2^{m-2}(2^{m-1}-3)+1 \\
		2^{m-1}(2^{m-1}-1)~ & ~3\cdot2^{2m-2}-2 \\
		2^{2m-2}~ & ~2^{m-2}(2^{m-1}+3) \\
		\hline
	\end{array}
\end{array}$$

\begin{theorem}
	Suppose that ${\mathcal{C}_{D_{3}}}$ and ${D_{3}}$ are defined by $(1)$ and $(4)$, respectively, then ${\mathcal{C}_{D_{3}}}$ is a $[~2^{m-1}(2^{m-1}-1), ~2 m,~ 2^{m-1}(2^{m-2}-1)~]$  projective four-weight linear code with weight distribution given in Table $3$.
\end{theorem}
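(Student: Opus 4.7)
The plan is to proceed as for $\mathcal{C}_{D_1}$ and $\mathcal{C}_{D_2}$, starting from the character identities $\mathbf{1}[T=0]=\tfrac{1+(-1)^T}{2}$ and $\mathbf{1}[T=1]=\tfrac{1+(-1)^{T-1}}{2}$ applied to the two defining trace conditions of $D_3$. This writes the indicator of $D_3$ as $\tfrac14\bigl(1+(-1)^{\operatorname{Tr}(x^{d+1}y)}\bigr)\bigl(1+(-1)^{\operatorname{Tr}(x^d y)-1}\bigr)$. Summing over $(x,y)\in\mathbb{F}_{2^m}^{*}\times\mathbb{F}_{2^m}$ and using the orthogonality relations \eqref{C1} and \eqref{C2}, the two single-character cross terms vanish (because $x^{d+1}$ and $x^d$ are nonzero on $\mathbb{F}_{2^m}^{*}$), while the product cross term contributes only at $x=1$ where $x^d(x+1)=0$, yielding the length $n=|D_3|=2^{m-1}(2^{m-1}-1)$.

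For a nonzero codeword $\mathbf{c}(a,b)$, inserting the same indicator expansion into $w(\mathbf{c}(a,b))=\tfrac{n}{2}-\tfrac12\sum_{(x,y)\in D_3}(-1)^{\operatorname{Tr}(ax^d y+bx)}$ reduces the weight to $w(\mathbf{c}(a,b))=\tfrac{n}{2}-\tfrac18(S_1+S_5+S_6+S_7)$, with $S_1$ taken from Lemma 3.3 and $S_5,S_6,S_7$ from Lemma 3.7; the $-1$ in the exponents of $S_6$ and $S_7$ is precisely what encodes the constraint $\operatorname{Tr}(x^d y)=1$. Substituting the tabulated values of those lemmas and branching on whether $a\in\{0,1\}$, on $b=0$ versus $b\neq 0$, and on the values of $\operatorname{Tr}(b),\operatorname{Tr}(ab),\operatorname{Tr}((a+1)b)$, exactly four nonzero weights emerge: $w_1=2^{m-1}(2^{m-2}-1)$, $w_2=2^{m-2}(2^{m-1}-1)$, $w_3=2^{2m-3}$, and $w_4=n=2^{m-1}(2^{m-1}-1)$. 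A convenient consistency check is that $w_4$ comes uniquely from $\mathbf{c}(1,0)=(\operatorname{Tr}(x^d y))_{(x,y)\in D_3}$, which is the all-ones vector by the very definition of $D_3$.

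The main obstacle is the frequency count. The only delicate subcase is $a\notin\{0,1\}$ with $b\neq 0$: for such $a$ the elements $a$ and $a+1$ are $\mathbb{F}_2$-linearly independent in $\mathbb{F}_{2^m}$, so the two linear forms $b\mapsto\operatorname{Tr}(ab)$ and $b\mapsto\operatorname{Tr}((a+1)b)$ are independent, and the joint pair $\bigl(\operatorname{Tr}(ab),\operatorname{Tr}((a+1)b)\bigr)$ takes each value in $\mathbb{F}_2^2$ for exactly $2^{m-2}$ values of $b$. Summing the contributions across all cases then gives $A_{w_1}=A_{w_3}=2^{2m-2}-1$, $A_{w_2}=2^{2m-1}$, $A_{w_4}=1$, and $A_0=1$, which total $2^{2m}$; the first two Pless power moments from Lemma 2.2 serve as an independent cross-check. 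Since $A_0=1$, the map $(a,b)\mapsto\mathbf{c}(a,b)$ is injective, whence $\dim\mathcal{C}_{D_3}=2m$ and the minimum distance is $w_1=2^{m-1}(2^{m-2}-1)$; projectivity follows directly from Lemma 3.8. Note that $w_1/w_4<1/2$, so Lemma 2.3 does not apply, which is consistent with the theorem claiming only ``four-weight'' rather than ``minimal four-weight''.
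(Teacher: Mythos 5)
Your proposal is correct and follows essentially the same route as the paper: expand the indicator of $D_3$ via additive characters, reduce the weight to $\tfrac{n}{2}-\tfrac18(S_1+S_5+S_6+S_7)$, and invoke Lemmas 3.3 and 3.7; your case analysis reproduces exactly the four weights and frequencies of Table 3 (and your identification of $\mathbf{c}(1,0)$ as the all-ones word explains the unique codeword of weight $n$). If anything, your frequency count is more explicit than the paper's, which asserts the values without the $\mathbb{F}_2$-independence argument for $(\operatorname{Tr}(ab),\operatorname{Tr}((a+1)b))$ that you supply.
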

$$\begin{array}{l}
	\footnotesize{\text { Table 3. The weight distribution of } \mathcal{C}_{D_{3}}}\\
	\begin{array}{cc}
		\hline \text { weight } & \text { frequency } \\
		\hline 0~ & ~1 \\
		2^{m-1}(2^{m-2}-1)~ & ~2^{2m-2}-1 \\
		2^{ m-2}(2^{m-1}-1)~ & ~2^{2m-1} \\
		2^{2m-3}~ & ~2^{2m-2}-1 \\
		2^{m-1}(2^{m-1}-1)~ & ~1 \\
		\hline
	\end{array}
\end{array}$$\\

\subsection{The proofs of Theorems 4.1-4.3}

\noindent{\bfseries The proof for Theorem $4.1$.}

According to the definition, the length of  $\mathcal{C}_{D_{1}}$ equals to 
$$\begin{aligned}
	n_{1} & =\left|D_{1}\right| \\
	& =\frac{1}{2} \sum_{z \in \mathbb{F}_{2}} \sum_{x \in \mathbb{F}_{2^{m}}^{*}} \sum_{y \in \mathbb{F}_{2^m}}(-1)^{z \operatorname{Tr}(x^{d+1}y+x^dy+x)} \\
	& =\left(2^{m}-1\right) 2^{m-1}+\frac{1}{2} \sum_{x \in \mathbb{F}_{2^{m}}^{*}}(-1)^{\operatorname{Tr}(x)} \sum_{y \in \mathbb{F}_{2^{m}} }(-1)^{\operatorname{Tr}(x^d(x+1)y)} \\
	& =(2^{m}-1) 2^{m-1}-2^{m-1} \\
	& =2^m(2^{m-1}-1).
\end{aligned}$$

Let
$$
N_{1}=\#\left\{(x, y) \in \mathbb{F}_{2^{m}}^{*} \times \mathbb{F}_{2^{m}}: \operatorname{Tr}(x^{d+1}y+x^dy+x)=0 \text { and } \operatorname{Tr}(ax^dy+bx)=0\right\},
$$
then the weight of the nonzero codeword ${\mathbf{c}}$ of the linear code ${\mathcal{C}_{D_{1}}}$ is
$$\begin{aligned}
	w t_{1}(\mathbf{c}) & =n_{1}-N_{1} \\
	& =n_{1}-\frac{1}{2^{2}} \sum_{z_{1} \in \mathbb{F}_{2}} \sum_{z_{2} \in \mathbb{F}_{2}} \sum_{x \in \mathbb{F}_{2^m}^{*} } \sum_{y \in \mathbb{F}_{2^{m}}}(-1)^{z_{1} \operatorname{Tr}(x^{d+1}y+x^dy+x)+z_{2} \operatorname{Tr}(ax^dy+bx)} \\
	& =n_{1}-\frac{1}{2} n_{1}-\frac{1}{2^{2}} \sum_{x \in \mathbb{F}_{2^m}^{*}} \sum_{y \in \mathbb{F}_{2^{m}}}(-1)^{\operatorname{Tr}(ax^dy+bx)}\\
	&\quad-\frac{1}{2^{2}} \sum_{x \in \mathbb{F}_{2^{m}}^{*}} \sum_{y \in \mathbb{F}_{2^{m}}}(-1)^{\operatorname{Tr}( x^{d+1}y+x^dy+x)+\operatorname{Tr}(ax^dy+bx)} \\
	& =2^{m-1}(2^{m-1}-1)-\frac{1}{2^{2}} S_{1}-\frac{1}{2^{2}} S_{3}.
\end{aligned}$$
Plugging in Lemma $3.3$ and Lemma $3.5$, we can get 
$$\footnotesize {w t_{1}(\mathbf{c})=\left\{\begin{array}{ll}
	2^{m-2}(2^m-3), &  if~a\in\mathbb{F}_{2^m}\setminus \mathbb{F}_{2}, ~b\in\mathbb{F}_{2^m} ~and~\operatorname{Tr}((a+1)(b+1))=0;\\
	2^{m-1}(2^{m-1}-1), &  if~a=1~and~b\in\mathbb{F}_{2^m}~or~ a=0,~b\neq0~and ~\operatorname{Tr}(b+1)=0;\\
	2^{m-2}(2^m-1), &  if~a\in\mathbb{F}_{2^m}\setminus \mathbb{F}_{2}, ~b\in\mathbb{F}_{2^m} ~and~\operatorname{Tr}((a+1)(b+1))=1; \\
	2^{2m-2}, &  if~a=0,~ b\neq0~and~\operatorname{Tr}(b+1)=1.
\end{array}\right.}$$

Now we give the weight $w_{1}=2^{m-2}(2^m-3),~w_{2}=2^{m-1}(2^{m-1}-1),~w_{3}=2^{2m-2}$ and $w_{4}=2^{2m-2}$, their corresponding frequencies are $A_{w_{1}},~A_{w_{2}},~A_{w_{3}}$ and ${A_{w_{4}}}$, respectively. Hence we can obtain
\begin{align}	
	A_{w_{1}}=2^m(2^{m-1}-1), ~ A_{w_{2}}=3\cdot2^{m-1}, 
	~A_{w_{3}}=2^m(2^{m-1}-1), ~ A_{w_{4}}=2^{m-1}-1. \notag
\end{align}

According  to  $D_{1}\subseteq \mathbb{F}_{2^{m}}^{*} \times \mathbb{F}_{2^{m}}$ and Lemma $3.8$, we can get that $\mathcal{C}_{D_{1}}$ is a projective code.

 When $m\geq3$, then for $\mathcal{C}_{D_{1}}$, it holds that
\[
\frac{w_{\min }}{w_{\max }}=\frac{2^{m-2}(2^m-3)}{2^{2m-2}}=1-\frac{3}{2^{m}}>\frac{1}{2},
\]
now by Lemma $2.3$, all nonzero codewords in $\mathcal{C}_{D_{1}}$ are minimal, and so their dual codes can be employed to construct secret sharing schemes with interesting access structures.\\

\noindent{\bfseries The proof for Theorem $4.2$.}

According to the definition, the length of  $\mathcal{C}_{D_{2}}$ equals to 
$$\begin{aligned}
	n_{2} & =\left|D_{2}\right| \\
	& =\frac{1}{2} \sum_{z \in \mathbb{F}_{2}} \sum_{x \in \mathbb{F}_{2^{m}}^{*}} \sum_{y \in \mathbb{F}_{2^m}^{*}}(-1)^{z \operatorname{Tr}(x^{d+2}y+x^{d+1}y)} \\
	& =\frac{1}{2} (2^{m}-1) (2^{m}-1)+\frac{1}{2} \sum_{x \in \mathbb{F}_{2^{m}}^{*}}\sum_{y \in \mathbb{F}_{2^m}^{*}}(-1)^{\operatorname{Tr}(x^{d+1}(x+1)y)} \\
	&=\frac{1}{2}(2^{m}-1) (2^{m}-1)+\frac{1}{2}(2^{m}-1)-\frac{1}{2}(2^{m}-2) \\
	& =2^{2m-1}-2^m+1.
\end{aligned}$$

Let
$$
N_{2}=\#\left\{(x, y) \in \mathbb{F}_{2^m}^{*} \times \mathbb{F}_{2^{m}}^{*}: \operatorname{Tr}(x^{d+2}y+x^{d+1}y)=0 \text { and } \operatorname{Tr}(ax^dy+bx)=0\right\},
$$
then the weight of the nonzero codeword ${\mathbf{c}}$ of the linear code ${\mathcal{C}_{D_{2}}}$ is
$$\begin{aligned}
	w t_{2}(\mathbf{c}) & =n_{2}-N_{2} \\
	& =n_{2}-\frac{1}{2^{2}} \sum_{z_{1} \in \mathbb{F}_{2}} \sum_{z_{2} \in \mathbb{F}_{2}} \sum_{x \in \mathbb{F}_{2^m}^{*} } \sum_{y \in \mathbb{F}_{2^{m}}^{*}}(-1)^{z_{1} \operatorname{Tr}(x^{d+2}y+x^{d+1}y)+z_{2} \operatorname{Tr}(ax^dy+bx)} \\
	& =n_{2}-\frac{1}{2} n_{2}-\frac{1}{2^{2}} \sum_{x \in \mathbb{F}_{2^m}^{*}} \sum_{y \in \mathbb{F}_{2^{m}}^{*}}(-1)^{\operatorname{Tr}(ax^dy+bx)}\\
	&\quad-\frac{1}{2^{2}} \sum_{x \in \mathbb{F}_{2^{m}}^{*}} \sum_{y \in \mathbb{F}_{2^{m}}^{*}}(-1)^{\operatorname{Tr}( x^{d+2}y+x^{d+1}y)+\operatorname{Tr}(ax^dy+bx)} \\
	& =2^{m-1}(2^{m-1}-1)+\frac{1}{2}-\frac{1}{2^{2}} S_{2}-\frac{1}{2^{2}} S_{4}.
\end{aligned}$$
Plugging in Lemma $3.4$ and Lemma $3.6$, we can get 
$$w t_{2}(\mathbf{c})=\left\{\begin{array}{ll}
	2^{m-1}(2^{m-1}-1), &  if~a\neq0,~a\notin\Upsilon_{1}~and~b\neq0,\\ &or~a\in\Upsilon_{1},~b\neq0~and~\operatorname{Tr}(b)=1,\\
	&or~a=0,~b\neq0~and~\operatorname{Tr}(b)=0,\\
	&or~a\in\Upsilon_{1},~b=0;\\
	2^{m}(2^{m-2}-1)~or~2^{2m-2}, &  otherwise.
\end{array}\right.$$

Now we give the weight $w_{5}=2^{m}(2^{m-2}-1),~w_{6}=2^{m-1}(2^{m-1}-1)$ and $w_{7}=2^{2m-2}$, their corresponding frequencies are $A_{w_{5}},~A_{w_{6}}$ and ${A_{w_{7}}}$, respectively.  According to Lemma $2.2$, Lemma $3.1$ and Corollary $3.2$, we can obtain
\begin{align}	
	A_{w_{5}}=2^{m-2}(2^{m-1}-3)+1,~A_{w_{6}}=3\cdot2^{2m-2}-2,~A_{w_{7}}=2^{m-2}(2^{m-1}+3).\notag
\end{align}

According  to  $D_{2}\subseteq \mathbb{F}_{2^{m}}^{*} \times \mathbb{F}_{2^{m}}^{*}$ and Lemma $3.8$, we can get that $\mathcal{C}_{D_{2}}$ is a projective code.

When $m\geq4$, then for $\mathcal{C}_{D_{2}}$, it holds that
\[
\frac{w_{\min }}{w_{\max }}=\frac{2^{m}(2^{m-2}-1)}{2^{2m-2}}=1-\frac{1}{2^{m-2}}>\frac{1}{2},
\]
now by Lemma $2.3$, all nonzero codewords in $\mathcal{C}_{D_{2}}$ are minimal, and so their dual codes can be employed to construct secret sharing schemes with interesting access structures.\\

\noindent{\bfseries The proof for Theorem $4.3$.}

According to the definition, the length of  $\mathcal{C}_{D_{3}}$ equals to 
$$\begin{aligned}
	n_{3} & =\left|D_{3}\right| \\
	& =\frac{1}{2^2} \sum_{z_{1} \in \mathbb{F}_{2}} \sum_{z_{2} \in \mathbb{F}_{2}} \sum_{x \in \mathbb{F}_{2^{m}}^{*} } \sum_{y \in \mathbb{F}_{2^{m}}}(-1)^{z_{1} \operatorname{Tr}(x^{d+1}y)+z_{2} (\operatorname{Tr}(x^dy)-1)} \\
	& =\left(2^{m}-1\right) 2^{m-2}+\frac{1}{2^2}\sum_{x \in \mathbb{F}_{2^{m}}^{*} } \sum_{y \in \mathbb{F}_{2^{m}}}(-1)^{ \operatorname{Tr}(x^{d+1}y)} \\
	&~+\frac{1}{2^2}\sum_{x \in \mathbb{F}_{2^{m}}^{*} } \sum_{y \in \mathbb{F}_{2^{m}}}(-1)^{\operatorname{Tr}(x^dy)-1}+\frac{1}{2^2} \sum_{x \in \mathbb{F}_{2^{m}}^{*}}\sum_{y \in \mathbb{F}_{2^m} }(-1)^{\operatorname{Tr}(x^d(x+1)y)-1}\\
	& =(2^{m}-1) 2^{m-2}+0+0-2^{m-2} \\
	& =2^{m-1}(2^{m-1}-1).
\end{aligned}$$

Let
$$
N_{3}=\#\left\{(x, y) \in \mathbb{F}_{2^{m}}^{*} \times \mathbb{F}_{2^{m}}: \operatorname{Tr}(x^{d+1}y)=0,~\operatorname{Tr}(x^{d}y)=1\text { and } \operatorname{Tr}(ax^dy+bx)=0\right\},
$$
then the weight of the nonzero codeword ${\mathbf{c}}$ of the linear code ${\mathcal{C}_{D_{3}}}$ is
$$\begin{aligned}
	w t_{3}(\mathbf{c}) & =n_{3}-N_{3} \\
	& =n_{3}-\frac{1}{2^{3}} \sum_{z_{1} \in \mathbb{F}_{2}} \sum_{z_{2} \in \mathbb{F}_{2}} \sum_{z_{3} \in \mathbb{F}_{2}}\sum_{x \in \mathbb{F}_{2^m}^{*}} \sum_{y \in \mathbb{F}_{2^{m}}}(-1)^{z_{1} \operatorname{Tr}(ax^dy+bx)+z_{2} \operatorname{Tr}(x^{d+1}y)+z_{3} (\operatorname{Tr}(x^dy)-1)} \\
	& =n_{3}-\frac{1}{2} n_{3}-\frac{1}{2^{3}} \sum_{x \in \mathbb{F}_{2^m}^{*}} \sum_{y \in \mathbb{F}_{2^{m}}}(-1)^{\operatorname{Tr}(ax^dy+bx)}-\frac{1}{2^{3}} \sum_{x \in \mathbb{F}_{2^{m}}^{*}} \sum_{y \in \mathbb{F}_{2^{m}}}(-1)^{\operatorname{Tr}( x^{d+1}y)+\operatorname{Tr}(ax^dy+bx)}\\
	&\quad-\frac{1}{2^{3}} \sum_{x \in \mathbb{F}_{2^{m}}^{*}} \sum_{y \in \mathbb{F}_{2^{m}}}(-1)^{\operatorname{Tr}( x^{d}y)-1+\operatorname{Tr}(ax^dy+bx)}\\
	&\quad-\frac{1}{2^{3}} \sum_{x \in \mathbb{F}_{2^{m}}^{*}} \sum_{y \in \mathbb{F}_{2^{m}}}(-1)^{\operatorname{Tr}( x^{d+1}y)+\operatorname{Tr}( x^{d}y)-1+\operatorname{Tr}(ax^dy+bx)}  \\
	& =2^{m-2}(2^{m-1}-1)-\frac{1}{2^{3}} S_{1}-\frac{1}{2^{3}} S_{5}-\frac{1}{2^{3}} S_{6}-\frac{1}{2^{3}} S_{7}.
\end{aligned}$$
Plugging in Lemma $3.3$ and Lemma $3.7$, we can get 
$$\footnotesize {w t_{3}(\mathbf{c})=\left\{\begin{array}{ll}
		2^{m-1}(2^{m-2}-1), &  if~a=1,~b\neq0~and~\operatorname{Tr}(ab)=0,\\
		&or~a\in\mathbb{F}_{2^m}\setminus \mathbb{F}_{2},~b\in\mathbb{F}_{2^m},~\operatorname{Tr}(ab)=0~and~\operatorname{Tr}((a+1)b)=1;\\
		2^{m-2}(2^{m-1}-1), &  if~a=1,~b\neq0~and~\operatorname{Tr}(ab)=1,\\
		&or~ a=0,~b\neq0~and~\operatorname{Tr}((a+1)b)=1,\\
		&or~a\in\mathbb{F}_{2^m}\setminus \mathbb{F}_{2},~b\in\mathbb{F}_{2^m},~\operatorname{Tr}(ab)=1~and~\operatorname{Tr}((a+1)b)=1,\\
		&or~a\in\mathbb{F}_{2^m}\setminus \mathbb{F}_{2},~b\in\mathbb{F}_{2^m},~\operatorname{Tr}(ab)=0~and~\operatorname{Tr}((a+1)b)=0;\\
		2^{2m-3}, & if~ a=0,~b\neq0~and~\operatorname{Tr}((a+1)b)=0, \\
		&or~a\in\mathbb{F}_{2^m}\setminus \mathbb{F}_{2},~b\in\mathbb{F}_{2^m},~\operatorname{Tr}(ab)=1~and~\operatorname{Tr}((a+1)b)=0;\\
		2^{m-1}(2^{m-1}-1), &if~ a=1, ~b=0~and~\operatorname{Tr}(ab)=0;\\
		2^{m-2}(2^{m}-1), &if~ a=1, ~b=0~and~\operatorname{Tr}(ab)=1.
	\end{array}\right.}$$

Now we give the weight $w_{8}=2^{m-2}(2^m-3),~ w_{9}=2^{m-1}(2^{m-1}-1),~w_{10}=2^{2m-2}$, $w_{11}=2^{2m-2}$ and $w_{12}=2^{m-2}(2^{m}-1)$, their corresponding frequencies are $A_{w_{8}}, ~A_{w_{9}}, ~ A_{w_{10}}$,~ ${A_{w_{11}}}$ and ${A_{w_{12}}}$, respectively. Hence we can obtain
\begin{align}	
	A_{w_{8}}=2^{2m-2}-1,~A_{w_{9}}=2^{2m-1},~A_{w_{10}}=2^{2m-2}-1,	~A_{w_{11}}=1,~A_{w_{12}}=0.\notag
\end{align}

According  to  $D_{3}\subseteq \mathbb{F}_{2^{m}}^{*} \times \mathbb{F}_{2^{m}}$ and Lemma $3.8$, we can get that $\mathcal{C}_{D_{3}}$ is a projective code.\\

\subsection{Some examples}

$\quad\quad\quad\quad\quad\quad\quad\quad\quad\quad\quad\text { Table } 4.~ \text { Some examples }$ 
\begin{table}[!ht]
	\centering 
		\begin{tabular}{|c|c|c|c|}
		\hline
		$m$& Theorem& parameters &weight enumerator \\ \hline 
		\multirow{3}{*}{$m=3$}&Theorem 4.1&[24,6,10]&$ 1+24z^{10}+12z^{12}+24z^{14}+3z^{16}$\\ 
		\cline{2-4} 
		&Theorem 4.2&[25,6,8] & $1+3z^8+46z^{12}+14z^{16}$\\
		\cline{2-4} 
		&Theorem 4.3& [12,6,4]& $1+15z^{4}+32z^{6}+15z^{8}+z^{12}$ \\
		\hline
		\multirow{2}{*}{$m=4$}&Theorem 4.2&[113,8,48] & $1+21z^{48}+190z^{56}+44z^{64}$\\ 
		\cline{2-4} 
		&Theorem 4.3&[56,8,24] & $1+63z^{24}+128z^{28}+ 63z^{32}+z^{56}$ \\
		\hline 
		\multirow{3}{*}{$m=5$}&Theorem 4.1& [480,10,232]& $1+480z^{232}+48z^{240}+480 z^{248}+15z^{256}$ \\ 
		\cline{2-4} 
		&Theorem 4.2&[481,10,224] & $1+105z^{224}+766z^{240}+152z^{256}$\\
		\cline{2-4} 
		&Theorem 4.3&[240,10,112] & $1+255z^{112}+512z^{120}+ 255z^{128}+z^{240}$\\
		\hline  
	\end{tabular}
\end{table}

\begin{remark}
	According to the Griesmer bound \cite{A6}, the code  $[12, 6, 4]$ is optimal.
\end{remark}

\section{Conclusions}
In this manuscript, a class of projective three-weight linear codes and two classes of projective four-weight linear codes over $\mathbb{F}_{2}$ are constructed. The projective three-weight linear code and one class of projective four-weight linear codes (Theorem $4.1$) are suitable for applications in secret sharing schemes. In particular, according to the Griesmer bound we obtain a class of projective four-weight linear codes  (Theorem $4.3$) which are optimal  when $m=3$.

\end{document}